\newcounter{indice}
\newcommand{\permutation}[1]{
\setcounter{indice}{0};
\foreach \i in {#1} 
\addtocounter{indice}{1};

\addtocounter{indice}{1}
\draw [help lines] (1,1) grid (\theindice,\theindice);

\setcounter{indice}{1};

\foreach \i in { #1 } {
\draw (\theindice+.5,\i+.5) [fill] circle (.2);
\addtocounter{indice}{1};
}
\addtocounter{indice}{-1};
}
\newtheorem{prop}{Proposition}
\newtheorem{defn}{Definition}
\newtheorem{thm}{Theorem}
\newcommand{\N}{\ensuremath{\mathbb N}}
\begin{document}
\title{Simple permutations poset}

\author{Adeline Pierrot}
\email{adeline.pierrot@liafa.jussieu.fr}
\address{LIAFA, UMR 7089, Universit\'e Paris Diderot, Paris, France}

\author{Dominique Rossin}
\email{rossin@lix.polytechnique.fr}
\address{LIX, UMR 7161, Ecole Polytechnique, Palaiseau, France}

\begin{abstract}
This article studies the poset of simple permutations with respect to the pattern involvement. We specify results on critically indecomposable posets obtained by  Schmerl and Trotter in \cite{SchmerlTrotter1993} to simple permutations and prove that if $\sigma, \pi$ are two simple permutations such that $\pi < \sigma$ then there exists a chain of simple permutations $\sigma^{(0)} = \sigma, \sigma^{(1)}, \ldots, \sigma^{(k)}=\pi$ such that $|\sigma^{(i)}| - |\sigma^{(i+1)}| = 1$ - or $2$ when permutations are exceptional- and $\sigma^{(i+1)} < \sigma^{(i)}$. This characterization induces an algorithm polynomial in the size of the output to compute the simple permutations in a wreath-closed permutation class.
\end{abstract}
\maketitle
\section{Introduction}

Simple permutations are the permutations which map no proper non-singleton interval onto an interval. Those permutations play a key role in the study of permutation classes, that is closed sets of permutations. More precisely,  they are core objects in the substitution decomposition of permutations. For example, if a class contains a finite number of simple permutations then it is finitely based, meaning that the class can be expressed as the set of permutations that do not contain as pattern any permutation of a finite set $B$. Moreover, the generating function of the permutation class is algebraic. On second hand, even if the pattern involvement problem is NP-complete in general, there exists a FPT algorithm \cite{BR06} where the parameter is the length of the largest simple permutation that appears as a pattern of the involved permutations.

In this article, we study the set of simple permutations with respect to the pattern containment relation. 
In \cite{SchmerlTrotter1993}, Schmerl and Trotter  study critically indecomposable partially ordered sets of integers and prove many structural results. 
They notice that their results still holds for all relational structures, in our case, permutations.

In this article, we  focus on simple permutations and show that the general results on integers can be refined in our case.
More precisely, if $\sigma, \pi$ are two simple permutations such that $\pi$ is pattern of $\sigma$ -$\pi < \sigma$- \cite{SchmerlTrotter1993} proves that there exists a chain of simple permutations $\sigma^{(0)} = \sigma, \sigma^{(1)}, \ldots, \sigma^{(k)}=\pi$ such that $|\sigma^{(i)}| - |\sigma^{(i+1)}| = 1$ or $2$  and $\sigma^{(i+1)} \prec \sigma^{(i)}$.
Using the structure of simple permutations, we strengthen  the result and show that in the case of permutations, we can find a chain with all size differences of $1$ except when $\sigma$ is exceptional. In the latter there exists a chain with all size differences of $2$.

This structural result on permutations and pattern involvment has many consequences. First, it allows us to compute the average number of points in a simple permutations that can be removed -one at each time- in order to obtain another simple permutation.
On another hand, it gives rise to a polynomial algorithm to generate simple permutations of a wreath-closed class of permutations.  This algorithm roughly starts by looking to simple permutations of size $4$ and iterates over the size of permutations. The characterization of the preceding chain translates into a polynomial time algorithm for finding simple permutations of size $n+1$ in $Av(B)$ knowing only simple permutations of size $n$ and $n-1$ in this class. Note that our algorithm requires no pattern involvement test. Although we give this algorithm in the general framework of wreath-closed permutation classes, we apply it for classes containing only a finite number of simple permutations. It can be also used to generate the simple permutations of any wreath-closed class up to a given size, even if the number of simple permutations is infinite.
 
 Together with preceding results of Albert, Atkinson, Brignall \cite{AA05, Bri06, Bri07, BHV06b} and our results \cite{BBR09, BBPR09}, this algorithm allows
 to compute the generating function of a wreath-closed class of permutations containing a finite number of simple permutations. The overall complexity of this algorithm is polynomial in the number of simple permutations. Some statistical results are given in the last section.

 \section{Definitions}

A permutation $\sigma$ is a bijective map from $\{1\ldots n\}$ onto $\{ 1 \ldots n\}$ with $n = |\sigma|$. We either represent a permutation by a word $\sigma = \sigma_1\sigma_2\ldots \sigma_n$ where $\sigma_i = \sigma(i)$ or its graphical representation in a grid (see Figure~\ref{fig:exceptional} for examples).

Let $\pi$ and $\sigma$ be two permutations. We say that $\pi = \pi_1 \ldots \pi_k$ is a {\em pattern} of $\sigma$ and we write $\pi \preceq \sigma$ if and only if there exist $i_1 < i_2 < \ldots < i_k$ such that $\pi$ is order-isomorphic to $\sigma_{i_1} \sigma_{i_2} \ldots \sigma_{i_k}$. A permutation class is a downward-closed set of permutations under pattern relation.
We can also define permutation classes by the set of minimal (for $\preceq$)  permutations not in the class which is an antichain for the pattern relation. This set is called the {\em basis} of the class. We denote by $Av(B)$ the permutation class which is the set of permutations that do not contain any of the permutations $\pi \in B$ as a pattern. For example $Av(231)$ is the class of one-stack sortable permutations.
When the basis $B$ contains only simple permutations the permutation class $Av(B)$ is said to be {\em wreath-closed}. Wreath-closed classes are defined in \cite{AA05} in a
different way but the authors prove that this definition is equivalent.

An interval in a permutation is a consecutive set of elements $\sigma_i \sigma_{i+1}\ldots \sigma_j$ such that the set of values $\{ \sigma_i, \sigma_{i+1},\ldots \sigma_j \}$ is an interval. A permutation is said to be simple if and only if its intervals are trivial -the singletons and the whole permutation-. As example 1, 12, 21, 2413 and 3142 are the simple permutations of size $\leq 4$. A subset of simple permutations, called exceptional ones plays a key role in this article.

\begin{defn}
Exceptional permutations are permutations defined below for every $m \geq 2$ (see Figure~\ref{fig:exceptional}):
\begin{itemize}
\item $2\ 4\ 6\ 8\ldots (2m)\ 1\ 3\ 5\ldots (2m-1)$ --- type 1
\item $(2m-1)\ (2m-3)\ldots 1\ (2m)\ (2m-2)\ldots 2$ --- type 2
\item $(m+1)\ 1\ (m+2)\ 2\ldots (2m)\ m$ --- type 3
\item $m\ (2m)\ (m-1)\ (2m-1)\ldots 1\ (m+1)$ --- type 4
\end{itemize}
\end{defn}

\begin{figure}[ht]
\begin{center} 
\begin{tikzpicture}[scale=.2]
\permutation{2,4,6,8,10,1,3,5,7,9}
\end{tikzpicture}
\hspace{0.5cm}
\begin{tikzpicture}[scale=.2]
\permutation{9,7,5,3,1,10,8,6,4,2}
\end{tikzpicture}
\hspace{0.5cm}
\begin{tikzpicture}[scale=.2]
\permutation{6,1,7,2,8,3,9,4,10,5}
\end{tikzpicture}
\hspace{0.5cm}
\begin{tikzpicture}[scale=.2]
\permutation{5,10,4,9,3,8,2,7,1,6}
\end{tikzpicture}
\caption{Exceptional permutation of type 1, 2, 3 and 4}
\label{fig:exceptional}
\end{center}
\end{figure}
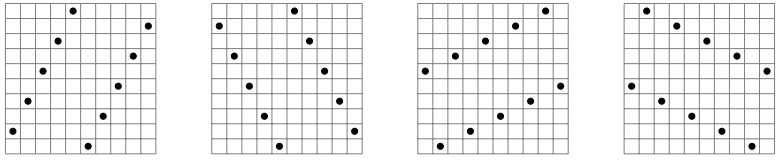
Notice that exceptional permutations are simple.
Notice also that, if we remove the symbols $2m-1$ and $2m$ from the first two types, we obtain an other exceptional permutation of the same type; and likewise if we remove the symbols in the last two positions from the types 3 and 4 and renormalize the result. 

\begin{prop}\label{prop:TypeExceptional}
Let $\sigma, \sigma'$ be two exceptional permutations with $|\sigma| \leq |\sigma'|$. Then $\sigma \preceq \sigma'$ if and only if $\sigma$ and $\sigma'$ are exceptional permutations of the same type.
\end{prop}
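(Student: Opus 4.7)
The \emph{if} direction is immediate from the remark preceding the statement: deleting the last two values from an exceptional permutation of type 1 or 2 of size $2m$, or deleting its last two positions for types 3 and 4, yields the (unique) exceptional permutation of the same type and size $2(m-1)$. Iterating $(|\sigma'|-|\sigma|)/2$ such deletions starting from $\sigma'$ exhibits $\sigma$ as a pattern of $\sigma'$.

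For the \emph{only if} direction, my plan is to distinguish the four types by invariants preserved under pattern containment. I will use two: (I$_1$) which of the simple permutations $2413$ and $3142$ is a pattern of the permutation, and (I$_2$) the lengths of its longest increasing and longest decreasing subsequences (LIS and LDS). I expect the following to hold for $m \geq 3$: type 1 contains $2413$ but not $3142$, with LIS $= m$ and LDS $= 2$; type 2 contains $3142$ but not $2413$, with LIS $= 2$ and LDS $= m$; type 3 contains $3142$ but not $2413$, with LIS $= m$ and LDS $= 2$; type 4 contains $2413$ but not $3142$, with LIS $= 2$ and LDS $= m$. Thus the two invariants together uniquely determine the type.

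To verify these invariants efficiently, I plan to treat only type 1 directly. Since it is the juxtaposition of the two increasing sequences $2, 4, \ldots, 2m$ and $1, 3, \ldots, 2m-1$, every pattern of type 1 is again a juxtaposition of two increasing sequences. A short case split on how four chosen positions fall into the two halves shows that $3142$ is never such a juxtaposition, and the values of LIS and LDS are then read directly from the structure. For the remaining three types I would transport the conclusion by the symmetries relating them to type 1: type 2 is the reverse of type 1, type 3 is its inverse, and type 4 is the complement of type 3. All three of reverse, complement and inverse exchange $3142$ with $2413$, while reverse and complement swap LIS with LDS and inverse preserves both.

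Given the table, $\sigma \preceq \sigma'$ forces $\sigma'$ to inherit each of the two invariants of $\sigma$: it contains whichever of $2413, 3142$ is contained in $\sigma$, and it admits a monotone subsequence (increasing or decreasing, matching that of $\sigma$) of length at least $|\sigma|/2$. For $|\sigma|/2 \geq 3$ this pins down the type of $\sigma'$ to match that of $\sigma$. For the remaining case $|\sigma|=4$, the permutation $2413$ is simultaneously of types 1 and 4, and $3142$ simultaneously of types 2 and 3, so that any admissible $\sigma'$ shares a type label with $\sigma$. The main technical step is the case analysis establishing that type 1 avoids $3142$; the remainder is routine bookkeeping with the symmetries.
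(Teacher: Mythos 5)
Your proof is correct, but note that the paper itself gives no proof of this proposition: it is stated bare, with only the preceding remark (that deleting the two largest symbols from types 1--2, or the last two positions from types 3--4, yields the next smaller exceptional permutation of the same type) covering the \emph{if} direction, exactly as in your first paragraph. For the \emph{only if} direction, which the paper leaves entirely implicit, your invariant argument is a genuine and complete route. I checked the key facts: type 2 is the reverse of type 1, type 3 its inverse, and type 4 the complement of type 3; all three symmetries exchange $2413$ and $3142$, reverse and complement swap LIS with LDS while inverse preserves both; type 1 is a concatenation of two increasing words, so its patterns are too, and $3142$ admits no such splitting, giving the avoidance; and the LIS of type 1 is exactly $m$ since an increasing subsequence using largest first-half value $2k$ can pick up at most $m-k$ second-half values. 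So the table (type 1: $2413$, LIS $=m$, LDS $=2$; type 2: $3142$, LIS $=2$; type 3: $3142$, LIS $=m$; type 4: $2413$, LIS $=2$) does separate the four types for $m\geq 3$, both invariants are monotone under pattern containment, and your observation that $2413$ carries type labels $\{1,4\}$ and $3142$ carries $\{2,3\}$ correctly disposes of the base case $|\sigma|=4$. This is a satisfactory proof of a statement the authors assert without justification.
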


\begin{figure}[ht]
\begin{minipage}[b]{.5\linewidth}
\begin{center}
\begin{tikzpicture}
\begin{scope}[scale=.2]
\permutation{1,3,5,7,9,11,2,4,6,8,10}
\end{scope}
\begin{scope}[xshift=3cm,scale=.2]
\permutation{6,11,5,10,4,9,3,8,2,7,1}
\end{scope}
\end{tikzpicture}
\caption{Parallel alternations}
\label{fig:para}
\end{center}
\end{minipage}
\begin{minipage}[b]{.5\linewidth}
\begin{center}
\begin{tikzpicture}
\begin{scope}[scale=.2]
\permutation{6,5,7,4,8,3,9,2,10,1,11}
\end{scope}
\begin{scope}[xshift=3cm,scale=.2]
\permutation{1,3,5,7,9,11,10,8,6,4,2}
\end{scope}
\end{tikzpicture}
\caption{Wedge alternations}
\label{fig:alternations}
\end{center}
\end{minipage}
\end{figure} 
A more general kind of permutations containing exceptionnal permutations will appear naturally in this article.
An {\it alternation} is a permutation in which every odd entry lies to the left of every even entry, or any symmetry of such a permutation. A {\it parallel} alternation is one in which these two sets of entries form monotone subsequences, either both increasing or both decreasing. A {\it wedge} alternation is one in which the two sets of entries form monotone subsequences pointing in opposite directions. See Figures~\ref{fig:para} and \ref{fig:alternations} for examples. Among parallel or wedge alternations, only exceptional permutations are simple.


\section{Pattern containment on simple permutations}\label{sec:MotifsSimples}

\subsection{Simple patterns of simple permutations}

In this section, we study the poset of simple permutations with respect to the pattern relation.  More precisely, we specify and extend results from \cite{SchmerlTrotter1993} to permutations.
To each permutation $\sigma$ is associated a poset $P( \sigma )=([1..n],\prec)$ where $i \prec  j \Leftrightarrow (i < j \text{ and } \sigma_i < \sigma_j)$. 
A poset on $[1..n]$ is called {\em indecomposable} \cite{SchmerlTrotter1993} if  it does not contain any non-trivial interval with respect to the relations , $\prec$ in our case. It is called {\em critically indecomposable} if furthermore whenever an element is removed, the resulting poset is not indecomposable.
In the specific case of  permutations those poset characteristics can be  translated as stated in the following propositions:

\begin{prop}
The poset $P(\sigma)$ is indecomposable if and only if $\sigma$ is simple.
\end{prop}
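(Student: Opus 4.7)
The plan is to establish the two directions by matching, term by term, the definition of an interval of $\sigma$ (consecutive positions whose images form a consecutive set of values) with the definition of a module of the poset $P(\sigma)$ (a set $I$ such that every element outside $I$ has the same relation under $\prec$ to every element of $I$).

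For the easy direction, suppose $\sigma$ has a non-trivial permutation interval $I = \{i, i{+}1, \ldots, j\}$ with value set $\{a, a{+}1, \ldots, b\}$. I would show that $I$ is a module of $P(\sigma)$: any $x \notin I$ satisfies $x < i$ or $x > j$, and since the values in $[a,b]$ are already used up by $I$, we must have $\sigma_x < a$ or $\sigma_x > b$. The four combinations each force a uniform behaviour of $x$ with respect to every element of $I$ (uniformly below, above, or incomparable), so $I$ is a non-trivial module and $P(\sigma)$ is not indecomposable.

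For the converse, suppose $I \subseteq [1..n]$ is a non-trivial module of $P(\sigma)$. I would first show that the positions in $I$ are consecutive, by contradiction: take $i, j \in I$ with $i < j$ and suppose some $k \notin I$ satisfies $i < k < j$. A short case analysis on the relative order of $\sigma_i, \sigma_j, \sigma_k$ shows that in every case the relation of $k$ to $i$ differs from the relation of $k$ to $j$ in $P(\sigma)$ (for example, if $\sigma_i < \sigma_j < \sigma_k$, then $i \prec k$ while $k$ and $j$ are incomparable, since $k<j$ but $\sigma_k > \sigma_j$). This contradicts the module property, so $I$ must be a set of consecutive positions $\{i, \ldots, j\}$. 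An entirely symmetric case analysis shows that the values $\{\sigma_l : l \in I\}$ form a consecutive interval of integers: if a value $v$ strictly between $\sigma_i$ and $\sigma_j$ occurred at a position $l \notin I$, then $l$ would necessarily lie outside $[i,j]$ (by the first step), and in both cases $l<i$ and $l>j$ the element $l$ again has different relations to $i$ and $j$. Combining both consecutiveness results yields a non-trivial permutation interval, contradicting simplicity of $\sigma$.

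The only real work lies in the systematic case analysis; once the position-consecutiveness is handled, the value-consecutiveness is obtained by the same argument up to position-value symmetry, so I expect no structural obstacle.
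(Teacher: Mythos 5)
The forward direction (a non-trivial permutation interval of $\sigma$ yields a non-trivial module of $P(\sigma)$) is correct. The gap is in the converse, at your very first step. Your case analysis for position-consecutiveness misses the case $i<k<j$ with $\sigma_j<\sigma_k<\sigma_i$: there $i$ and $k$ are incomparable in $P(\sigma)$ (since $i<k$ but $\sigma_i>\sigma_k$) and $k$ and $j$ are incomparable (since $k<j$ but $\sigma_k>\sigma_j$), so $k$ bears the \emph{same} relation to $i$ and to $j$, and no contradiction with the module property arises. The intermediate claim you are trying to establish is in fact false: a non-trivial module of $P(\sigma)$ need not have consecutive positions, hence need not be a permutation interval. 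For $\sigma=321$ the poset $P(\sigma)$ is a three-element antichain, so $\{1,3\}$ is a non-trivial module even though positions $1$ and $3$ are not consecutive; the same phenomenon occurs inside larger permutations whenever the module contains a descending pattern split by outside points. (Your value-consecutiveness step also relies on the position step, so it inherits the problem.)

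The proposition itself remains true, but the converse must produce \emph{some other} non-trivial permutation interval from a non-trivial module $M$, not show that $M$ is one. Concretely: if some $x\notin M$ has elements of $M$ on both sides of it in position (or in value), the module condition forces all of $M$ to be incomparable to $x$, with the part of $M$ left of $x$ lying north-west of $x$ and the part right of $x$ lying south-east. Iterating this, $M$ splits into maximal blocks of consecutive positions arranged in a strictly descending staircase; each block can be shown to be a permutation interval, and when all blocks are singletons one shows instead that the sets of points separating consecutive elements of $M$ are permutation intervals, the degenerate case being the decreasing permutation, which is not simple. Some argument of this kind is needed; note that the paper states this proposition without proof, treating it as a direct translation of the definitions from \cite{SchmerlTrotter1993}, so there is no authorial proof to compare against — but as written, your proof does not go through.
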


\begin{prop}\label{prop:critic}
$P(\sigma)$ is critically indecomposable if and only if $\sigma$ is exceptional. 
\end{prop}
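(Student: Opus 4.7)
My plan is to prove both directions separately: the forward direction reduces to a short per-type check, and the backward direction rests on the Schmerl--Trotter classification already cited in the introduction.

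For the implication ``$\sigma$ exceptional $\Rightarrow$ $P(\sigma)$ critically indecomposable'', simplicity of every exceptional $\sigma$ is already asserted in the remark following the definition, so only the critical part needs work. I would argue by a direct case analysis on the four types. In types 1 and 2, removing any single entry leaves a permutation in which two consecutive values $k, k+1$ land in adjacent positions, producing a length-two interval; at the two extreme positions a slightly larger interval (formed by the four smallest values, in pattern $2413$ or $3142$) appears instead. A symmetric argument in which the roles of positions and values are exchanged handles types 3 and 4. Each removal is a one-line explicit verification, and the whole direction amounts to combinatorial bookkeeping on eight or so sub-cases.

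For the converse, ``$P(\sigma)$ critically indecomposable $\Rightarrow$ $\sigma$ exceptional'', my plan is to invoke the classification of critically indecomposable binary relational structures established by Schmerl and Trotter. That classification says that, above size four, every such structure is rigidly determined: its ground set splits into two classes along which consecutive elements alternate in a prescribed way. Transferred to the two-dimensional poset $P(\sigma)$, this forces the diagram of $\sigma$ to be an alternation in the sense defined at the end of Section~2. Combining this with the simplicity of $\sigma$, which follows from indecomposability of $P(\sigma)$, and with the observation already recorded in Section~2 that among parallel and wedge alternations only exceptional permutations are simple, yields the desired conclusion. The small sizes $n \leq 4$ can be handled separately by inspecting the list $1, 12, 21, 2413, 3142$ of simple permutations of size at most~4 and matching them against the types.

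The main obstacle I anticipate is the translation step in the backward direction: Schmerl and Trotter's result is stated for abstract relational structures, and one has to re-express it in the geometric language of permutation diagrams in order to recognise that the underlying $\sigma$ is an alternation. Once that reformulation is in place, nothing further is needed, since the remaining implication is a direct reading of the final sentence of Section~2.
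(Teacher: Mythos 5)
Your forward direction (exceptional $\Rightarrow$ critically indecomposable) is sound as a plan: the paper itself does not spell this out --- its entire proof is the citation of Corollary 5.8(2) of Schmerl--Trotter, which classifies the critically indecomposable posets outright --- so your explicit per-type verification is extra but legitimate work. One small imprecision: for types 1 and 2 the two removals that fail to create a length-two interval are the extreme \emph{values} (not positions); there the deleted point leaves a corner point whose complement is the large interval. That is easily repaired.

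The backward direction, however, has a genuine gap as written. Your chain is: critically indecomposable $\Rightarrow$ $\sigma$ is an alternation $\Rightarrow$ (since $\sigma$ is simple and ``among parallel and wedge alternations only exceptional permutations are simple'') $\sigma$ is exceptional. The last step does not follow, because the quoted observation from Section~2 applies only to \emph{parallel} and \emph{wedge} alternations, and ``alternation'' does not imply either: there exist simple alternations that are neither parallel, wedge, nor exceptional, for instance $3\,5\,1\,6\,2\,4$ (odd values occupy the first three positions, the permutation is simple, but the two interleaved subsequences are not monotone). So if your translation of the Schmerl--Trotter classification only delivers ``alternation,'' the argument cannot close. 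The remedy is to use the classification at full strength: Corollary 5.8(2) does not merely say the ground set splits into two alternating classes, it exhibits the critically indecomposable posets explicitly (two isomorphism types in each even size, none in odd size), and transporting those \emph{specific} posets back to permutations forces the two interleaved subsequences to be monotone --- i.e.\ it yields the exceptional permutations directly, with no detour through the Section~2 remark. Once you carry that stronger conclusion through the translation step, your proof matches the paper's (which simply delegates both directions to that corollary).
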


\begin{proof}
This is a consequence of  Corollary  $5.8 (2)$ of \cite{SchmerlTrotter1993}.
\end{proof}

The standard poset isomorphism can be transposed  to integer posets. 

\begin{defn}
Let  $A,B \subset \N$ posets, then $A \equiv B$ if  $A \sim B$  as posets and the isomorphism keeps the natural integer ordering on  $\N$.
\end{defn}

\begin{prop}
Let $\sigma$ and $\pi$ be permutations. Then $\pi \preceq \sigma$ if and only if there exists $A \subset P(\sigma)$ such that $A \equiv P(\pi)$.
\end{prop}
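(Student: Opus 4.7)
The plan is to unwind the definitions on both sides and check that ``occurrence of a pattern'' is a literal translation of ``existence of an order-preserving copy of the poset.'' The crucial observation is that $P(\sigma)$ is a complete encoding of the permutation: for two positions $i<j$ in $[1..n]$, the relation $i \prec j$ holds exactly when $\sigma_i < \sigma_j$, while $i$ and $j$ are incomparable exactly when $\sigma_i > \sigma_j$. So reading off $P(\sigma)$ together with the natural integer order on positions recovers $\sigma$ up to nothing. The definition of $\equiv$ is tailored precisely so that this coupling (relation $\prec$ and integer order on positions) is preserved.

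For the forward direction, suppose $\pi \preceq \sigma$ with witnessing indices $i_1 < i_2 < \cdots < i_k$ in $[1..n]$ such that $\sigma_{i_1}\sigma_{i_2}\cdots\sigma_{i_k}$ is order-isomorphic to $\pi_1\pi_2\cdots\pi_k$. I would set $A = \{i_1,\ldots,i_k\}$ and define $\phi : [1..k] \to A$ by $\phi(j)=i_j$. This map is strictly increasing, hence preserves the natural integer order. To see it is a poset isomorphism from $P(\pi)$ onto $A$ with the induced $\prec$ from $P(\sigma)$, fix $j<j'$: then $j \prec j'$ in $P(\pi)$ iff $\pi_j < \pi_{j'}$, which by order-isomorphism is equivalent to $\sigma_{i_j} < \sigma_{i_{j'}}$, i.e.\ (since $i_j<i_{j'}$) to $i_j \prec i_{j'}$ in $P(\sigma)$.

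For the backward direction, suppose $A \subset [1..n]$ with $A \equiv P(\pi)$, and write $A = \{a_1 < a_2 < \cdots < a_k\}$. Because the isomorphism preserves the integer order, it must be the map $j \mapsto a_j$. Now for any $j<j'$ in $[1..k]$, the iso condition gives $\pi_j < \pi_{j'}$ iff $j \prec j'$ in $P(\pi)$ iff $a_j \prec a_{j'}$ in $P(\sigma)$ iff $\sigma_{a_j} < \sigma_{a_{j'}}$ (using $a_j < a_{j'}$). Hence $\sigma_{a_1}\sigma_{a_2}\cdots\sigma_{a_k}$ is order-isomorphic to $\pi$, and $\pi \preceq \sigma$.

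There is no real obstacle here; the statement is essentially a restatement of the definition of a pattern in the language of 2-dimensional posets. The only point requiring a touch of care is that the isomorphism underlying $\equiv$ must be monotone for the integer order, which is exactly what forces it to list the chosen positions in increasing order and thereby matches the convention $i_1<\cdots<i_k$ used in the definition of pattern containment.
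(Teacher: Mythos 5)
Your proof is correct and is exactly the intended argument: the paper states this proposition without any proof, treating it as an immediate consequence of the definitions of $P(\sigma)$ and $\equiv$. Your careful unwinding — identifying the occurrence $\{i_1<\cdots<i_k\}$ with the subset $A$ and noting that the integer-order-preserving isomorphism is forced to be the unique increasing bijection — is precisely the verification the authors leave to the reader.
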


The next four propositions are mere translations of results from \cite{SchmerlTrotter1993} on permutations.

\begin{prop}\label{prop:size4}
Let $\sigma$ be a simple permutation  with $|\sigma| \geq 3$, then either $2\,4\,1\,3$ or $3\,1\,4\,2$ is a simple pattern of $\sigma$.
\end{prop}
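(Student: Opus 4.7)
The plan is to translate the proposition into the poset language developed just above in the excerpt and then invoke the corresponding structural result of Schmerl--Trotter. Since $\sigma$ is simple, the proposition asserting that $P(\sigma)$ is indecomposable iff $\sigma$ is simple gives that $P(\sigma)$ is indecomposable. I would first note that the hypothesis $|\sigma|\geq 3$ is effectively $|\sigma|\geq 4$: any permutation of size $3$ has two adjacent positions whose values form a size-$2$ interval, so no size-$3$ permutation is simple.

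The external input is Schmerl--Trotter's theorem that every indecomposable poset on at least four elements contains an indecomposable subposet on exactly four elements. Applied to $P(\sigma)$, this yields a $4$-element subset $A\subset P(\sigma)$ with $A$ indecomposable. By the proposition turning pattern containment into subposet containment (namely $\pi\preceq\sigma$ iff there exists $A\subset P(\sigma)$ with $A\equiv P(\pi)$), such an $A$ witnesses a simple pattern $\pi$ of $\sigma$ with $|\pi|=4$.

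To finish, I would enumerate the simple permutations of size $4$: among the $24$ permutations of $\{1,2,3,4\}$, any one containing a size-$2$ interval (two adjacent entries whose values are consecutive) or a size-$3$ interval is decomposable, and a direct case check leaves only $2413$ and $3142$, as already asserted in the definitions section. Hence $\pi\in\{2413,3142\}$, proving the statement.

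The only obstacle is a pointer-type issue: locating the precise statement in \cite{SchmerlTrotter1993} that guarantees a $4$-element indecomposable restriction (rather than merely some indecomposable proper restriction), and checking that, because $\equiv$ respects the integer ordering, an indecomposable $4$-element integer subposet really does correspond to a \emph{permutation} pattern on $4$ points and not to some abstract $4$-element poset without a permutation realization. Both points are straightforward consequences of the framework already set up in this section.
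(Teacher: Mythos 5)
Your proposal is correct and follows essentially the same route as the paper: cite Theorem 2.1 of Schmerl--Trotter to obtain a small indecomposable subposet of $P(\sigma)$, observe that no permutation of size $3$ is simple, and conclude by the enumeration of the simple permutations of size $4$ as $2413$ and $3142$. The paper's proof is exactly this one-line reduction, so there is nothing to add beyond your (correctly flagged, and harmless) bookkeeping about the precise form of the cited theorem and the compatibility of $\equiv$ with the integer ordering.
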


\begin{proof}
This is the mere translation of Theorem $2.1$ of \cite{SchmerlTrotter1993}, noticing that there are no simple permutations of size $3$ and there are only two simple permutations of size $4$.
\end{proof}

\begin{prop}[Theorem $2.2$  of \cite{SchmerlTrotter1993}]\label{prop:proposition0.2}
Let $\pi, \sigma$ be two simple permutations with $\pi \preceq \sigma$. If $3 \leq |\pi| \leq |\sigma|-2$, then there exists 
a simple permutation $\tau$ such that $\pi \preceq \tau \preceq \sigma$ and $|\tau| = |\pi|+2$.
\end{prop}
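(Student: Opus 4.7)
The plan is to reduce the statement to Theorem $2.2$ of \cite{SchmerlTrotter1993} by passing through the poset encoding $P(\cdot)$ introduced at the start of the section. The previous propositions give us all the dictionary entries we need: indecomposable posets correspond exactly to simple permutations, and pattern containment $\pi \preceq \sigma$ corresponds exactly to the existence of a subset $A \subset P(\sigma)$ with $A \equiv P(\pi)$ (i.e.\ poset-isomorphic while preserving the natural order on $\N$). So the proof is essentially a translation; the only real work is to make sure the translation goes in both directions cleanly.

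Concretely, I would proceed as follows. First, I would form $P(\sigma)$ and $P(\pi)$; both are indecomposable by the proposition characterizing indecomposability. Using the pattern-to-subposet proposition, I would fix some $A \subseteq P(\sigma)$ with $A \equiv P(\pi)$, so $A$ is itself indecomposable. The hypothesis $3 \leq |\pi| \leq |\sigma|-2$ translates to $3 \leq |A| \leq |P(\sigma)| - 2$, exactly the range in which Theorem $2.2$ of \cite{SchmerlTrotter1993} applies. Invoking that theorem produces an indecomposable subposet $B \subseteq P(\sigma)$ with $A \subseteq B$ and $|B| = |A| + 2$.

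It remains to read $B$ back as a permutation. Since $B$ is a subset of $P(\sigma) = ([1..n], \prec)$ carrying the induced order from $P(\sigma)$, it comes equipped with the natural ordering inherited from $\N$, so by the definition of $\equiv$ it is equivalent to $P(\tau)$ for the unique permutation $\tau$ of size $|\pi|+2$ whose graph is obtained by restricting $\sigma$ to the positions in $B$ and renormalizing. The indecomposability of $B$ guarantees (via the proposition characterizing simplicity) that $\tau$ is simple, and the chain of inclusions $A \subseteq B \subseteq P(\sigma)$ translates, again via the pattern-to-subposet proposition, into $\pi \preceq \tau \preceq \sigma$.

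The only place where one must be careful is this final translation step: the Schmerl--Trotter theorem is stated for abstract indecomposable posets, and one has to check that the intermediate indecomposable poset it produces, sitting inside $P(\sigma)$, inherits the natural integer order and therefore genuinely corresponds to a permutation pattern of $\sigma$ rather than to an abstract sub-poset. This is automatic because $B$ is a subset of $P(\sigma)$ with the induced order rather than an arbitrary isomorphic copy, so the condition in the definition of $\equiv$ is free. Once that observation is in place, the proof is complete; there is no combinatorial obstacle beyond verifying that the dictionary between simple permutations and indecomposable posets respects both the size bounds and the containment relation used by \cite{SchmerlTrotter1993}.
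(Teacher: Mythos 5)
Your proof is correct and takes essentially the same route as the paper, which simply cites Theorem $2.2$ of \cite{SchmerlTrotter1993} and treats the proposition as a ``mere translation'' via the dictionary between simple permutations and indecomposable posets. Your write-up just makes explicit the translation steps (and the point that the intermediate subposet inherits the integer order) that the paper leaves implicit.
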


\begin{prop}\label{prop:exceptional}
Let $\sigma$ be an exceptional  permutation. If  $3 \leq m \leq |\sigma|$, then $\sigma$ has a simple pattern of size $m$ if and only if $m$ is even.
\end{prop}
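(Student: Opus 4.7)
The plan is to treat the two directions separately, using the remark just after the definition of exceptional permutations for the ``if'' direction, and Propositions~\ref{prop:proposition0.2} and~\ref{prop:critic} for the ``only if'' direction. Note first that every exceptional permutation $\sigma$ has even size; write $|\sigma|=2k$.

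For the ``if'' direction, suppose $m$ is even with $4 \le m \le |\sigma|$. The observation stated right after the definition says that deleting the symbols $2k-1, 2k$ from an exceptional of type~1 or~2, or the two rightmost points from an exceptional of type~3 or~4, produces (after renormalisation) an exceptional permutation of the same type and length $2k-2$. Since exceptional permutations are simple, iterating this deletion inside $\sigma$ yields a pattern of $\sigma$ which is exceptional (hence simple) of every even length between $4$ and $|\sigma|$, so in particular of length~$m$.

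For the ``only if'' direction, I would argue by contradiction. Assume $\sigma$ admits a simple pattern $\pi$ of odd size $m$ with $3 \le m \le |\sigma|$. Since no simple permutation has size~$3$, we must have $m \ge 5$, and since $|\sigma|$ is even we have $m \le |\sigma|-1$. Whenever $m \le |\sigma|-2$, Proposition~\ref{prop:proposition0.2} supplies a simple permutation $\tau$ with $\pi \preceq \tau \preceq \sigma$ and $|\tau| = m+2$. Iterating this step, the size grows by $2$ at each stage and parity is preserved, so the process can be continued as long as the current size is $\le |\sigma|-2$; it therefore terminates at a simple pattern of $\sigma$ of size exactly $|\sigma|-1$, the largest odd number not exceeding $|\sigma|$. (If already $m = |\sigma|-1$, such a pattern is present from the start.) Either way, $\sigma$ has a simple pattern of size $|\sigma|-1$, i.e.\ one obtained by deleting a single point of $\sigma$. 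But Proposition~\ref{prop:critic} says that $\sigma$ is critically indecomposable, which forbids this. Contradiction.

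The only subtle point to verify is that the iteration of Proposition~\ref{prop:proposition0.2} really lands on size $|\sigma|-1$ and not earlier; this is immediate from parity, since each step increases the size by $2$, the starting size is odd, and $|\sigma|$ is even. All other steps are direct applications of the preceding propositions and of the deletion remark, so I do not expect any substantial obstacle.
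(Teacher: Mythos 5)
Your proof is correct, but it takes a genuinely different route from the paper. The paper's own proof is a one-line citation: it invokes Corollary~3.1 of \cite{SchmerlTrotter1993} directly, together with the observation that exceptional permutations have even size. You instead reassemble the statement from ingredients already present in the paper: the ``if'' direction from the deletion remark following the definition of exceptional permutations (iterated removal of two points stays within the same exceptional type, down to size $4$), and the ``only if'' direction by combining Proposition~\ref{prop:proposition0.2} (climb by steps of $2$ from an odd-size simple pattern up to a simple pattern of size $|\sigma|-1$, the parity argument guaranteeing you land exactly there) with Proposition~\ref{prop:critic} (critical indecomposability of exceptional permutations forbids any one-point deletion from remaining simple). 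The logic is sound and non-circular, since Propositions~\ref{prop:critic} and~\ref{prop:proposition0.2} are imported from \cite{SchmerlTrotter1993} independently of their Corollary~3.1. What your approach buys is a self-contained derivation that makes visible \emph{why} the parity obstruction arises --- it is exactly the interplay between the two-step climbing lemma and critical indecomposability --- at the cost of being longer than the paper's direct citation; the paper's version buys brevity but leaves the mechanism opaque to a reader who does not consult \cite{SchmerlTrotter1993}.
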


\begin{proof}
This is a direct consequence of Corollary $3.1$ of \cite{SchmerlTrotter1993} as every exceptional permutation is of even size.
\end{proof}

\begin{prop}[Corollary $5.10$ of \cite{SchmerlTrotter1993} + Proposition~\ref{prop:size4}]\label{prop:simplePattern}
Let $\sigma$ be a non exceptional simple permutation. If $4 \leq m \leq |\sigma|$ then $\sigma$ has a simple pattern of size $m$.
\end{prop}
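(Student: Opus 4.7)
The plan is to prove the existence of a simple pattern of $\sigma$ of every size $m$ in the range $4 \le m \le n$, where $n = |\sigma|$, by treating the top, bottom, and middle of the range separately.

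The case $m = n$ is immediate, since $\sigma$ itself is a simple pattern of size $n$. The case $m = 4$ is exactly Proposition~\ref{prop:size4}, which applies because the non-existence of simple permutations of size $3$ forces $|\sigma| \ge 4$.

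For the intermediate range $5 \le m \le n-1$ I would proceed by downward induction on $m$, constructing a descending chain $\sigma = \tau_n \succ \tau_{n-1} \succ \cdots \succ \tau_5$ of simple patterns of $\sigma$ with $|\tau_k| = k$, each obtained from its predecessor by deleting a single point. The driving tool is Proposition~\ref{prop:critic}: a non-exceptional simple permutation is, by definition, not critically indecomposable, hence has at least one point whose removal yields a simple permutation of size one less, and this smaller simple permutation is still a pattern of $\sigma$. Starting the induction at $\tau_n = \sigma$ is legitimate precisely because $\sigma$ is assumed non-exceptional.

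The main obstacle is that, a priori, nothing forces the simple permutation produced by a one-point removal to itself be non-exceptional; if it happens to be exceptional, then no further one-point deletion preserves simplicity and the induction collapses. To avoid getting trapped in this way, I would show---this is the combinatorial heart of Corollary~5.10 of \cite{SchmerlTrotter1993}---that a non-exceptional simple $\tau$ of size $k \ge 6$ always possesses enough admissible one-point removals that at least one of them lands on another non-exceptional simple permutation. The argument exploits the rigidity of exceptional permutations: there are only four of each even size (one per type), and by Proposition~\ref{prop:exceptional} they admit no simple patterns of odd size, which severely constrains when an exceptional permutation can arise as a one-point deletion inside a non-exceptional simple permutation. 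When the chain reaches $\tau_5$ the final descent to size $4$ is closed off by invoking Proposition~\ref{prop:size4} applied inside $\tau_5$, so together these steps exhaust every $m \in \{4, 5, \ldots, n\}$.
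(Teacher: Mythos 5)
Your overall skeleton (trivial top case, Proposition~\ref{prop:size4} at the bottom, a descending chain of one-point deletions in between, driven by the fact that a non-exceptional simple permutation is not critically indecomposable and hence admits at least one simplicity-preserving deletion) is sound, and you correctly identify the one place where it can break: a deletion may land on an exceptional permutation, after which no further one-point deletion preserves simplicity. The problem is that the lemma you invoke to get around this --- every non-exceptional simple $\tau$ with $|\tau| = k \ge 6$ has a one-point deletion landing on a \emph{non-exceptional} simple permutation --- is exactly the hard part, and your sketch of it is not a proof. When $k$ is even the lemma is trivial ($k-1$ is odd and exceptional permutations have even length), so everything rests on odd $k \ge 7$. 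There, the facts you cite (only four exceptional permutations of each even length; exceptional permutations have no odd-length simple patterns) constrain the exceptional permutations themselves, not the way they can arise as one-point deletions inside $\tau$: nothing you say rules out a non-exceptional simple $\tau$ of odd length whose (possibly very few) simplicity-preserving deletions all land among the four exceptional permutations of length $k-1$. The danger is real --- the paper's own example $5263714$ has the exceptional $415263$ among its length-$6$ simple patterns --- and while the lemma is in fact true, proving it seems to require either the full strength of Section~5 of \cite{SchmerlTrotter1993} (every indecomposable, non-critically-indecomposable structure on at least $7$ points has an indecomposable substructure of size $5$, hence by Proposition~\ref{prop:proposition0.2} of every odd size up to $k$) or an argument of the calibre of Theorem~\ref{thm:theorem0.3} with its parallel/wedge alternation case analysis. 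So the ``combinatorial heart'' you defer is essentially the entire content of the statement being proved.

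For comparison, the paper does not reprove this proposition at all: it is presented as a direct translation of Corollary~5.10 of \cite{SchmerlTrotter1993}, which is invoked to supply all sizes $m \ge 5$ at once, with Proposition~\ref{prop:size4} covering $m = 4$. If you want a self-contained argument you must either import that corollary explicitly (at which point your induction is unnecessary) or actually prove the missing lemma; as written, the proposal reduces the statement to an unproved claim of comparable difficulty.
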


The preceding result holds for non exceptional permutations. For exceptional ones, the following proposition concludes:

\begin{prop} \label{prop:except}
If $\sigma$ is an exceptional permutation, then for every $m$ such that $3 \leq m \leq |\sigma|$:
\begin{itemize}
\item If $m$ is odd, then $\sigma$ has no simple pattern of size $m$.
\item Otherwise $m$ is even and $\sigma$ has exactly one simple pattern of size $m$ which is the exceptional permutation of the same type as $\sigma$.
\end{itemize}
\end{prop}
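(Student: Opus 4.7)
The odd-$m$ half of the statement is immediate from Proposition~\ref{prop:exceptional}, so the work is all in the even case. My plan is to split that case into existence and uniqueness, treat existence by an explicit construction and uniqueness by combining the structural facts already assembled in the section.

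For existence, I would invoke the remark made right after the definition of exceptional permutations: deleting the entries $2m-1$ and $2m$ from an exceptional of type~$1$ or~$2$, or deleting the last two positions and renormalising for types~$3$ and~$4$, yields again an exceptional permutation of the same type and size two smaller. Iterating this observation produces, for every even $m$ with $2 \leq m \leq |\sigma|$, an exceptional pattern of $\sigma$ of the same type and of size $m$. This gives at least one simple pattern of size $m$, which is exceptional of the same type as $\sigma$.

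For uniqueness, let $\pi \preceq \sigma$ be any simple pattern with $|\pi| = m$ even and $m \geq 4$. The main step is to prove that $\pi$ must itself be exceptional. If $|\pi| = 4$ there is nothing to do, since the only two simple permutations of size~$4$ are $2413$ and $3142$, both exceptional. If $|\pi| \geq 5$ and $\pi$ were not exceptional, Proposition~\ref{prop:simplePattern} would give $\pi$ a simple pattern of every size in $[4, |\pi|]$, in particular of size~$5$. By transitivity this size-$5$ simple pattern would also be a simple pattern of $\sigma$, contradicting Proposition~\ref{prop:exceptional}, which forbids $\sigma$ from having any simple pattern of odd size. Hence $\pi$ is exceptional.

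Once $\pi$ is known to be exceptional, Proposition~\ref{prop:TypeExceptional} forces $\pi$ to be of the same type as $\sigma$, and since the definition specifies a single exceptional permutation for each type and each even size, $\pi$ is completely determined. The only step I expect to need care writing out is the reduction to size~$5$: I must cite Proposition~\ref{prop:simplePattern} for the non-exceptional case and handle the small-size situation $|\pi|=4$ separately, so that the chain $\pi \succeq (\text{odd-size simple}) \preceq \sigma$ is genuinely available to contradict Proposition~\ref{prop:exceptional}.
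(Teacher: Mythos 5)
Your proof is correct and follows essentially the same route as the paper: the core uniqueness argument (any non-exceptional simple pattern of size $\geq 5$ would, via Proposition~\ref{prop:simplePattern}, yield a size-$5$ simple pattern of $\sigma$, contradicting Proposition~\ref{prop:exceptional}, and then Proposition~\ref{prop:TypeExceptional} pins down the type) is exactly the paper's. The only cosmetic difference is that you establish existence by iterating the explicit two-point deletion noted after the definition, whereas the paper simply cites Proposition~\ref{prop:exceptional} for existence and lets the same characterization argument identify the pattern; both are fine.
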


\begin{proof}
The first item -$m$ is odd- is a direct consequence of Proposition~\ref{prop:exceptional}. For the second point, $\sigma$ has at least one simple pattern $\pi$ of size $m$ by Proposition~\ref{prop:exceptional}. Suppose now that $\pi$ is not exceptional, then $m \geq 5$ as simple permutations of size $4$ are exceptional. Then, using Proposition~\ref{prop:simplePattern}, $\pi$ has a simple pattern $\tau$ of size $5$, thus $\tau$ is a pattern of $\sigma$ but of odd size which is forbidden by Proposition~\ref{prop:exceptional}. So $\pi$ is exceptional and of the same type as $\sigma$ from Proposition~\ref{prop:TypeExceptional}.
\end{proof}

A direct consequence of the preceding proposition is that all simple patterns of exceptional permutations are exceptional. For non exceptional ones the following proposition describes the pattern containment relation:

\begin{prop} \label{prop:size-1}
Let $\sigma$ be a simple permutation of size $\geq 5$. Then $\sigma$ has a simple pattern of size $|\sigma|-1$ if and only if $\sigma$ is not exceptional.
\end{prop}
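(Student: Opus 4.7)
The plan is to observe that both directions of the biconditional follow directly from the propositions established immediately before, namely Propositions~\ref{prop:simplePattern} and \ref{prop:except}, so the proof will be very short and essentially just an application of these results with the specific value $m = |\sigma|-1$.

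For the forward direction (contrapositive), assume $\sigma$ is exceptional. I would first recall that every exceptional permutation has even size (this is built into the definition, since the four families are parametrised by $m \geq 2$ and all have length $2m$). Hence $|\sigma|-1$ is odd, and since $|\sigma| \geq 5$ we have $3 \leq |\sigma|-1 \leq |\sigma|$, so Proposition~\ref{prop:except} applies and tells us that $\sigma$ has no simple pattern of odd size in this range. In particular $\sigma$ has no simple pattern of size $|\sigma|-1$.

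For the backward direction, assume $\sigma$ is not exceptional. Set $m = |\sigma|-1$; since $|\sigma| \geq 5$ we have $4 \leq m \leq |\sigma|$, and Proposition~\ref{prop:simplePattern} guarantees that $\sigma$ has a simple pattern of size $m = |\sigma|-1$. This concludes the proof.

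There is no real obstacle here: the combinatorial content has already been absorbed into the earlier propositions (in particular Proposition~\ref{prop:simplePattern}, which itself combines Corollary~5.10 of \cite{SchmerlTrotter1993} with Proposition~\ref{prop:size4}, and Proposition~\ref{prop:except}, which uses Proposition~\ref{prop:exceptional} together with Proposition~\ref{prop:TypeExceptional}). The only thing to be careful about is the size hypothesis $|\sigma| \geq 5$, which is exactly what ensures $|\sigma|-1 \geq 4$ so that Proposition~\ref{prop:simplePattern} is applicable in the non-exceptional case, and also that $|\sigma|-1 \geq 3$ so that Proposition~\ref{prop:except} is applicable in the exceptional case.
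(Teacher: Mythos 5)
Your proof is correct and follows essentially the same route as the paper, which simply cites Proposition~\ref{prop:exceptional} (equivalently, the parity statement you invoke via Proposition~\ref{prop:except}) for the exceptional case and Proposition~\ref{prop:simplePattern} with $m=|\sigma|-1$ for the non-exceptional case. Your version just makes the size bookkeeping explicit.
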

\begin{proof}
Consequence of Proposition~\ref{prop:exceptional} and Proposition~\ref{prop:simplePattern} above.
\end{proof}

\subsection{Simple pattern containing a given simple permutation}

The results obtained in the preceding section describe how a simple permutation can give other simple permutations by deleting elements. In the sequel, we add another constraint on patterns, that is we want to delete elements in a simple permutation $\sigma$ containing a simple permutation $\pi$ as a pattern only by deleting one or two elements to obtain another simple permutation $\sigma'$ such that $\pi \preceq \sigma'$.

Theorem~\ref{thm:theorem0.3} deals with the case $|\pi| = |\sigma|-2$ and relies on the following intermediate result:

\begin{prop}\label{prop:intervalleOuCoin}
Let $\tau$ be a non simple permutation such that $\tau \setminus \{\tau_i\}$ is simple. Then $\tau_i$ belongs to an interval of size $2$ of $\tau$ or is in a corner of the graphical representation of $\tau$.
\end{prop}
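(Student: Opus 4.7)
The plan is to pick any non-trivial interval $I$ of $\tau$---one exists because $\tau$ is not simple---and split into cases according to whether the index $i$ lies inside the positional range of $I$ or not. Write $I$ as a block of positions $a, a+1, \ldots, b$ whose values form a consecutive range of integers, with $2 \leq b-a+1 \leq |\tau|-1$. The guiding intuition is that deleting $\tau_i$ either strictly shrinks $I$ (when $i$ lies inside $I$) or leaves $I$ unchanged (when $i$ is outside $I$), and in each case simplicity of $\tau \setminus \{\tau_i\}$ forces $I$ to have been already extremal in size.

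For the first case $i \in [a,b]$, I would check that $I \setminus \{\tau_i\}$ remains an interval of $\tau \setminus \{\tau_i\}$: its positions are consecutive after renumbering, and its values form a consecutive range after the natural renormalization. Its size is $|I|-1 \leq |\tau|-2$, so simplicity of $\tau\setminus\{\tau_i\}$ forces $|I|-1 = 1$, i.e.\ $|I| = 2$, and $\tau_i$ belongs to an interval of size $2$.

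For the second case $i \notin [a,b]$, I would first observe that $\tau_i$'s value must also lie outside the value range of $I$: the $|I|$ positions in $[a,b]$ already account for all $|I|$ values in that range. Hence $I$ survives untouched as an interval of size $|I|$ in $\tau \setminus \{\tau_i\}$, and simplicity forces $|I| = |\tau|-1$. But then $\tau_i$ is the unique point of $\tau$ outside $I$, so its position is $1$ or $|\tau|$ and its value is $1$ or $|\tau|$, placing $\tau_i$ in a corner of the graphical representation.

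The step requiring the most care is the second case, where the position and value of $\tau_i$ must be simultaneously pinned at an extreme; once one notices that a positional interval of a permutation also accounts exactly for the values in its image range, the argument reduces to a clean case analysis with no real obstacle.
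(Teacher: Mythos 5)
Your proof is correct and follows essentially the same route as the paper: both arguments take an arbitrary non-trivial interval $I$ of $\tau$, observe that after deleting $\tau_i$ it yields an interval of the simple permutation $\tau \setminus \{\tau_i\}$, and conclude that this residual interval is trivial, forcing either $|I|=2$ with $\tau_i \in I$ or $I = \tau \setminus \{\tau_i\}$ with $\tau_i$ in a corner. Your explicit case split on whether $i$ lies in the positional range of $I$ is just a slightly more detailed organization of the paper's dichotomy (singleton versus whole permutation), so there is nothing to add.
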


\begin{proof}
As $\tau$ is not simple, $\tau$ contains at least one non-trivial interval $I$. As $I$ is an interval of $\tau$, $I \setminus \{\tau_i\}$ is an interval of $\tau \setminus \{\tau_i\}$. But $\tau \setminus \{\tau_i\}$ is simple thus $I \setminus \{\tau_i\}$ is a trivial interval of $\tau \setminus \{\tau_i\}$, hence is a singleton $\{\tau_k\}$ or the whole permutation $\tau \setminus \{\tau_i\}$. But $I$ is non-trivial so in the first case $I = \{\tau_i, \tau_k\}$ and $\tau_i$ belongs to an interval of size $2$ of $\tau$, and in the second case $I = \tau \setminus \{\tau_i\}$ and $\tau_i$ is in a corner of the graphical representation of $\tau$.
\end{proof}

\begin{thm}\label{thm:theorem0.3}
Let $\sigma = \sigma_1 \sigma_2 \ldots \sigma_n$ be a non exceptional simple permutation of size $n \geq 4$ and $\pi$ a simple permutation of size $n-2$ such that $\pi \preceq \sigma$. Then there exists a simple permutation $\tau$ of size  $n-1$ such that $\pi \preceq \tau \preceq \sigma$.
\end{thm}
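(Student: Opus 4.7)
The plan is to fix an embedding of $\pi$ inside $\sigma$, consider the two size-$(n-1)$ subpermutations obtained by deleting each of the two positions of $\sigma$ not used by that embedding, and show that at least one of them is simple. First, the hypothesis that $\sigma$ is non exceptional forces $n\geq 5$, since every simple permutation of size $4$ is exceptional. Let $I=\{i_1<\cdots<i_{n-2}\}$ be the positions of some embedding of $\pi$ in $\sigma$ and let $\{j,k\}=\{1,\ldots,n\}\setminus I$. Both $\tau_j:=\sigma\setminus\{\sigma_j\}$ and $\tau_k:=\sigma\setminus\{\sigma_k\}$ have size $n-1$ and contain $\pi$ as a pattern via $I$, so if either one is simple we take $\tau$ to be that permutation.

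The remaining case, where both $\tau_j$ and $\tau_k$ are non simple, is the one I then rule out. Since $\tau_j\setminus\{\sigma_k\}$ is a copy of the simple permutation $\pi$, Proposition~\ref{prop:intervalleOuCoin} applied to $\tau_j$ says that $\sigma_k$ either belongs to an interval of size $2$ of $\tau_j$ or sits at a corner of the graphical representation of $\tau_j$; the symmetric argument applied to $\tau_k$ gives the analogous dichotomy for $\sigma_j$ inside $\tau_k$. In the interval-of-size-$2$ alternative for $\tau_j$ there is some $\sigma_a$ with $a\notin\{j,k\}$ whose position and value are consecutive to those of $\sigma_k$ inside $\tau_j$; since $\sigma$ itself is simple, $\{\sigma_k,\sigma_a\}$ is not a $2$-interval of $\sigma$, so $\sigma_j$ must disrupt this pair in position or in value. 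If $\sigma_j$ disrupts the pair in both, then $\{\sigma_j,\sigma_k,\sigma_a\}$ is a $3$-interval of $\sigma$, contradicting simplicity. In the corner alternative for $\tau_j$, using the eight symmetries of permutations (reverse, complement, inverse) we may assume $\sigma_k$ lies at the top-left corner of $\tau_j$; this pins $\sigma_k$ to be $n$ or $n-1$ and $k$ to be $1$ or $2$, with further constraints coming from whether $\sigma_j=n$ or $j=1$.

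Combining these constraints with the symmetric ones obtained for $\sigma_j$ inside $\tau_k$, a finite case analysis shows that each surviving configuration produces a non trivial interval of $\sigma$: either a $2$- or $3$-interval of $\sigma$ involving $\sigma_j$, $\sigma_k$ and possibly $\sigma_a$ or the partner $\sigma_b$ of $\sigma_j$, or the set of positions $I$ itself is an interval of size $n-2$ because $\sigma_j$ and $\sigma_k$ together occupy two extremal rows/columns of $\sigma$, trapping the remaining $n-2$ points inside a bounding box. Either outcome contradicts simplicity of $\sigma$, completing the proof. The main obstacle is carrying out this case analysis cleanly: the symmetries of permutations cut down the number of cases substantially, but the mixed situation where $\sigma_k$ is in a $2$-interval of $\tau_j$ while $\sigma_j$ is at a corner of $\tau_k$ (or vice versa) still requires its own dedicated argument.
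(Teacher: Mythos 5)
Your opening moves match the paper's: fix the two points $\sigma_j,\sigma_k$ whose deletion yields $\pi$, observe that if either single deletion is simple you are done, and otherwise apply Proposition~\ref{prop:intervalleOuCoin} to get the ``corner or $2$-interval'' dichotomy for each of them. But the plan for the remaining case has a genuine gap: you propose to \emph{rule out} the situation where both $\tau_j$ and $\tau_k$ are non simple by a finite local case analysis around $\{j,k\}$, showing each configuration forces a non-trivial interval of $\sigma$. That cannot work, because the surviving configuration --- $\sigma_j$ separating $\sigma_k$ from its partner $\sigma_a$ in exactly one of position/value, and symmetrically $\sigma_k$ separating $\sigma_j$ from its partner $\sigma_b$ --- is perfectly consistent with $\sigma$ being simple; it is exactly the local picture inside exceptional permutations and, more to the point, it occurs in non-exceptional simple $\sigma$ as well. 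In that situation the theorem is not rescued by a contradiction but by deleting a \emph{different} point: since $\sigma_j$ is the only element separating $\sigma_k$ from $\sigma_a$, one has $\pi=\sigma\setminus\{\sigma_a,\sigma_j\}$ as well, so $\sigma\setminus\{\sigma_a\}$ becomes a new candidate for $\tau$. Your proof never exploits this freedom to re-embed $\pi$, so it has no way to conclude when both of the two originally chosen deletions fail.

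The paper's argument iterates precisely this re-embedding: if $\sigma\setminus\{\sigma_a\}$ is again non simple, Proposition~\ref{prop:intervalleOuCoin} produces a new partner, and repeating until all points of $\sigma$ are exhausted shows that $\sigma$ must be a parallel or wedge alternation; since among such permutations only the exceptional ones are simple, the hypothesis that $\sigma$ is simple and \emph{non exceptional} finally yields the contradiction. Note that your sketch uses non-exceptionality only to deduce $n\geq 5$. That is a structural warning sign: the statement is false for exceptional $\sigma$ (an exceptional permutation of size $n$ has no simple pattern of size $n-1$ by Proposition~\ref{prop:except}), so any correct proof must invoke non-exceptionality in an essential, global way, as the paper does at the very last step. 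To repair your argument you would need to add the re-embedding step $\pi=\sigma\setminus\{\sigma_a,\sigma_j\}$, iterate it, and show the iteration forces the alternation structure on all of $\sigma$.
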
 

\begin{proof}
Supppose that such a permutation $\tau$ does not exist. We prove that this leads to a contradiction. Let $i,j$ such that $\pi = \sigma \setminus \{\sigma_i, \sigma_j\}$. If $\sigma \setminus \{ \sigma_i \}$ is simple then $\tau = \sigma \setminus \{ \sigma_i \}$ would contradict our hypothesis. Thus $\sigma \setminus \{ \sigma_i \}$ is not simple, but $\pi = \sigma \setminus \{\sigma_i, \sigma_j\}$ is simple. From Proposition~\ref{prop:intervalleOuCoin} $\sigma_j$ belongs to an interval of size $2$ of $\sigma \setminus \{ \sigma_i \}$ or is in a corner of the bounding box of the graphical representation of $\sigma \setminus \{ \sigma_i \}$ thanks to $\pi$. By symmetry between $i$ and $j$ the same results holds when exchanging these two indices. So there are $3$ different cases:
\begin{itemize}
\item $\sigma_i$ and $\sigma_j$ are both in a corner thanks to $\pi$. In that case $\pi$ is a non trivial interval of $\sigma$, which contradicts the fact that $\sigma$ is simple.
\item $\sigma_i$ belongs to an interval $I$ of size $2$ of $\sigma \setminus \{ \sigma_j \}$ and $\sigma_j$ is in a corner of $\sigma \setminus \{ \sigma_i \}$ thanks to $\pi$ (the same proof holds when exchanging $i$ and $j$). $\sigma$ is simple thus $\sigma_j$ is not in a corner of $\sigma$, but is in a corner of $\sigma \setminus \{ \sigma_i \}$ thus $\sigma_i$ is the only point separating $\sigma_j$ from a corner (see Figure~\ref{fig:cas2} for an example). Let $i_1$ such that $I = \{i,i_1\}$, then $\sigma_j$ is the only point separating $\sigma_{i_1}$ from $\sigma_i$, so $\pi = \sigma \setminus \{  \sigma_{i_1},\sigma_j \}$. If $\sigma \setminus \{ \sigma_{i_1} \}$ is simple then $\tau = \sigma \setminus \{ \sigma_{i_1} \}$ would answer the theorem, contradicting our hypothesis. 
Thus $\sigma \setminus \{ \sigma_{i_1} \}$ is not simple but $\pi = \sigma \setminus \{\sigma_{i_1}, \sigma_j\}$ is simple, hence from Proposition~\ref{prop:intervalleOuCoin} $\sigma_j$ belongs to an interval $J$ of size $2$ of $\sigma \setminus \{ \sigma_{i_1} \}$ or is in a corner of $\sigma \setminus \{ \sigma_{i_1} \}$ which is impossible as $\sigma_i$ separate it from one corner and $|\sigma| \geq 4$. Let $j_1$ such that $J = \{ j,j_1 \}$, then $\pi = \sigma \setminus \{ \sigma_{i_1}, \sigma_{j_1} \}$. If $\sigma \setminus \{ \sigma_{j_1} \}$ is simple then $\tau = \sigma \setminus \{ \sigma_{i_1} \}$ answer the theorem, contradiction.
Let $i_0 = i$ and $j_0 = j$, we recursively build $i_0,j_0,i_1,j_1,\ldots$ such that $\forall k, \pi = \sigma \setminus \{ \sigma_{i_k},\sigma_{j_k} \} = \sigma \setminus \{ \sigma_{j_k}, \sigma_{i_{k+1}} \}$ and $\sigma \setminus \sigma_{i_k}$ and $\sigma \setminus \sigma_{j_k}$ are not simple, until reaching all points of $\sigma$. 
$\sigma_i$ and $\sigma_{i_1}$ are in increasing -or decreasing- order. For each case, -see Figure~\ref{fig:cas2}-, $\sigma_{i_1}$ has a determined position. Then positions of $\sigma_{i_k}$ and $\sigma_{j_k}$ are fixed for all $k$ as $\sigma_{i_k}$ does not separate $\sigma_{i_{k-1}}$ from $\sigma_{i_{k-2}}$. 
Depending of the position of $\sigma_{i_1}$, $\sigma$ is either a parallel alternation or a wedge alternation thus is exceptional or not simple, a contradiction.
\begin{figure}[ht]
\begin{center} 
\begin{tikzpicture}[scale=.3]
\useasboundingbox (-1,-1) rectangle (10,8);
\draw (1.5,0.5) [fill] circle (0.2);
\draw (0.5,4.5) [fill] circle (0.2);
\node at (1.5,-1) {$\sigma_j$};
\node at (-1,4.5) {$\sigma_i$};
\draw [thick] (2,1) rectangle node {$\pi$} +(6.3,7);
\draw [thick] (0,0) rectangle (8.3,8);
\draw (1,0) -- (1,8);
\end{tikzpicture}
\begin{tikzpicture}[scale=.3]
\useasboundingbox (-1,-1) rectangle (10,8);
\draw [help lines] (1,0) grid (4,2);
\draw [help lines] (0,4) grid (3,6);
\draw (1.5,0.5) [fill] circle (0.2);
\draw (3.5,1.5) [fill] circle (0.2);
\draw (0.5,4.5) [fill] circle (0.2);
\draw [fill] (2.5,5.5) circle (0.2);
\node at (1.5,-1) {$\sigma_j$};
\node at (5,1.7) {$\sigma_{j_1}$};
\node at (-1,4) {$\sigma_i$};
\node at (4,5.6) {$\sigma_{i_1}$};
\draw [thick] (2,1) rectangle node {$\pi$} +(6.3,7);
\draw [thick] (0,0) rectangle (8.3,8);
\end{tikzpicture}
\begin{tikzpicture}[scale=.3]
\useasboundingbox (-1,-1) rectangle (10,8);
\draw [help lines] (1,0) grid (4,2);
\draw [help lines] (0,4) grid (3,6);
\draw (1.5,0.5) [fill] circle (0.2);
\draw (3.5,1.5) [fill] circle (0.2);
\draw (0.5,5.5) [fill] circle (0.2);
\draw [fill] (2.5,4.5) circle (0.2);
\node at (1.5,-1) {$\sigma_j$};
\node at (5,1.7) {$\sigma_{j_1}$};
\node at (-1,5.5) {$\sigma_i$};
\node at (4,3.9) {$\sigma_{i_1}$};
\draw [thick] (2,1) rectangle node {$\pi$} +(6.3,7);
\draw [thick] (0,0) rectangle (8.3,8);
\end{tikzpicture}
\caption{Graphical representation of $\sigma$ in the case 2.}\label{fig:cas2}
\end{center}
\end{figure}
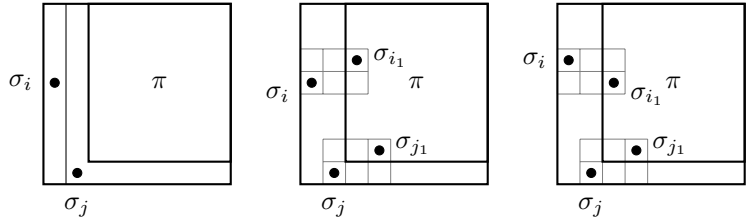
\item $\sigma_i$ belongs to an interval $I$ of size $2$ of $\sigma \setminus \{\sigma_{j}\}$ and $\sigma_j$ belongs to an interval $J$ of size $2$ of $\sigma \setminus \{\sigma_{i}\}$. Let $i_1$ such that $I = \{ i,i_1 \}$, $I$ is an interval of $\sigma \setminus \{\sigma_{j}\}$ but $\sigma$ is simple thus $\sigma_j$ is the only point separating $\sigma_i$ from $\sigma_{i_1}$.
Let $j_1$ such that $J = \{ j,j_1 \}$, $J$ is an interval of $\sigma \setminus \{\sigma_{i}\}$ but $\sigma$ is simple thus $\sigma_i$ is the only point separating $\sigma_j$ from $\sigma_{j_1}$. This indeed is one of the two cases depicted in Figure~\ref{fig:cas3} (up to symmetry). But $\pi = \sigma \setminus \{ \sigma_i, \sigma_{j} \} = \sigma \setminus \{ \sigma_{i_1}, \sigma_{j} \}$. If $\sigma \setminus \{ \sigma_{i_1} \}$ is simple then $\tau = \sigma \setminus \{ \sigma_{i_1} \}$ answer the theorem, a contradiction. Thus $\sigma \setminus \{ \sigma_{i_1} \}$ is not simple but $\pi = \sigma \setminus \{ \sigma_{i_1}, \sigma_{j} \}$ is simple thus from Proposition~\ref{prop:intervalleOuCoin} $\sigma_j$ belongs to an interval $J'$ of size $2$ of $\sigma \setminus \{ \sigma_{i_1}  \}$ or lies in a corner of $\sigma \setminus \{ \sigma_{i_1} \}$ which is impossible if $i_1 \neq n$ (up to symmetry). Let $J' = \{ j, j'\}$ then $\pi = \sigma \setminus \{ \sigma_{i_1},\sigma_j \} = \sigma \setminus \{ \sigma_{i_1},\sigma_{j'} \}$. 
If $\sigma \setminus \{ \sigma_{j'} \}$ is simple then $\tau = \sigma \setminus \{ \sigma_{i_1}\}$ answer the theorem, contradiction.
Moreover $\pi = \sigma \setminus \{ \sigma_{i},\sigma_j \} = \sigma \setminus \{ \sigma_{i},\sigma_{j_1} \}$. If $\sigma \setminus \{ \sigma_{j_1} \}$ 
is simple then $\tau = \sigma \setminus \{ \sigma_{j_1}\}$ fulfil the theorem, contradiction. Thus $\sigma \setminus \{ \sigma_{j_1}\}$ is not simple but $\pi = \sigma \setminus \{ \sigma_{j_1},\sigma_i \}$ is simple so that $\sigma_i$ belongs to an interval $I'$ of size $2$ of $\sigma \setminus \{ \sigma_{j_1} \}$ or lies in a corner of $\sigma \setminus \{ \sigma_{j_1}\}$, which is impossible if $j_1 \neq 1$ (up to symmetry). Let $i'$ such that $I' = \{ i,i'\}$, then $\pi = \sigma \setminus \{ \sigma_{i},\sigma_{j_1} \} = \sigma \setminus \{ \sigma_{i'},\sigma_{j_1} \}$. If $\sigma \setminus \{ \sigma_{i'} \}$ is simple then $\tau = \sigma \setminus \{ \sigma_{i'}\}$ fulfil our theorem, contradiction.
Similarly to the preceding case, if $i_0 = i, j_0 = j$ then we can prove by induction until reaching all points of $\sigma$ that either $\sigma$ is a parallel alternation or a wedge permutation so that $\sigma$ is exceptional or not simple which leads to a contradiction.
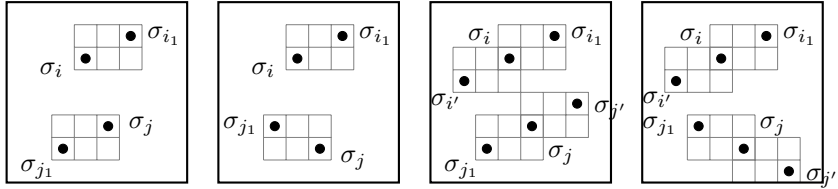
\begin{figure}[ht]
\begin{center}
\begin{tikzpicture}[scale=.3]
\useasboundingbox (0,0) rectangle (9,9);
\draw [help lines] (2,1) grid (5,3);
\draw [help lines] (3,5) grid (6,7);
\draw (2.5,1.5) [fill] circle (0.2);
\draw (4.5,2.5) [fill] circle (0.2);
\draw (3.5,5.5) [fill] circle (0.2);
\draw [fill] (5.5,6.5) circle (0.2);
\node at (1.4,0.6) {$\sigma_{j_1}$};
\node at (6,2.5) {$\sigma_j$};
\node at (2,5) {$\sigma_i$};
\node at (7,6.5) {$\sigma_{i_1}$};
\draw [thick] (0,0) rectangle (8,8);
\end{tikzpicture}
\begin{tikzpicture}[scale=.3]
\useasboundingbox (0,0) rectangle (9,9);
\draw [help lines] (2,1) grid (5,3);
\draw [help lines] (3,5) grid (6,7);
\draw (2.5,2.5) [fill] circle (0.2);
\draw (4.5,1.5) [fill] circle (0.2);
\draw (3.5,5.5) [fill] circle (0.2);
\draw [fill] (5.5,6.5) circle (0.2);
\node at (1,2.5) {$\sigma_{j_1}$};
\node at (6,1) {$\sigma_j$};
\node at (2,5) {$\sigma_i$};
\node at (7,6.5) {$\sigma_{i_1}$};
\draw [thick] (0,0) rectangle (8,8);
\end{tikzpicture}
\begin{tikzpicture}[scale=.3]
\useasboundingbox (0,0) rectangle (9,9);
\draw [help lines] (2,1) grid (5,3);
\draw [help lines] (3,5) grid (6,7);
\draw [help lines] (4,2) grid (7,4);
\draw [help lines] (1,4) grid (4,6);
\draw (6.5,3.5) [fill] circle (0.2);
\draw (5.5,6.5) [fill] circle (0.2);
\draw (2.5,1.5) [fill] circle (0.2);
\draw (4.5,2.5) [fill] circle (0.2);
\draw (1.5,4.5) [fill] circle (0.2);
\draw (3.5,5.5) [fill] circle (0.2);
\node at (1.4,0.6) {$\sigma_{j_1}$};
\node at (5.8,1.2) {$\sigma_j$};
\node at (8,3.2) {$\sigma_{j'}$};
\node at (2.2,6.5) {$\sigma_i$};
\node at (0.7,3.6) {$\sigma_{i'}$};
\node at (7,6.5) {$\sigma_{i_1}$};
\draw [thick] (0,0) rectangle (8,8);
\end{tikzpicture}
\begin{tikzpicture}[scale=.3]
\useasboundingbox (0,0) rectangle (9,9);
\draw [help lines] (2,1) grid (5,3);
\draw [help lines] (3,5) grid (6,7);
\draw [help lines] (1,4) grid (4,6);
\draw [help lines] (4,0) grid (7,2);
\draw (1.5,4.5) [fill] circle (0.2);
\draw (2.5,2.5) [fill] circle (0.2);
\draw (4.5,1.5) [fill] circle (0.2);
\draw (3.5,5.5) [fill] circle (0.2);
\draw [fill] (5.5,6.5) circle (0.2);
\draw (6.5,0.5) [fill] circle (0.2);
\node at (0.8,2.5) {$\sigma_{j_1}$};
\node at (5.8,2.5) {$\sigma_j$};
\node at (8,0.2) {$\sigma_{j'}$};
\node at (0.7,3.7) {$\sigma_{i'}$};
\node at (2.2,6.5) {$\sigma_i$};
\node at (7,6.5) {$\sigma_{i_1}$};
\draw [thick] (0,0) rectangle (8,8);
\end{tikzpicture}
\caption{Graphical representation of $\sigma$ in the case 3.}\label{fig:cas3}
\end{center}
\end{figure}
\end{itemize}
\end{proof}

Thanks to Theorem~\ref{thm:theorem0.3} and a simple induction, we are able to state our main result on pattern involvement.

\begin{thm}\label{thm:main}
Let $\sigma \not= \pi$ be two simple permutations, $\sigma$ non exceptional. If $\pi \prec \sigma$ and $|\pi| \geq 3$ then there exists a simple permutation $\tau$ such that $\pi \preceq \tau \prec \sigma$ and $|\tau| = |\sigma|-1$.
\end{thm}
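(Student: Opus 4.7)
The plan is to proceed by induction on $k = |\sigma| - |\pi|$, keeping $\sigma$ (and hence its non-exceptional character) fixed throughout. Note that $k \geq 1$: since $\pi \prec \sigma$ we have $|\pi| \leq |\sigma|$, and since $\sigma \neq \pi$ the equality case is excluded.

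For the base cases, when $k = 1$ the permutation $\tau := \pi$ itself witnesses the claim: it is simple, contains $\pi$ as a pattern trivially, equals $\pi$ which is strictly below $\sigma$, and has size $|\sigma|-1$. When $k = 2$, the claim is exactly the conclusion of Theorem~\ref{thm:theorem0.3} (whose hypothesis that $\sigma$ is non-exceptional of size $\geq 4$ is met, since $|\sigma| = |\pi|+2 \geq 5$), and that theorem directly supplies the desired $\tau$ of size $|\sigma|-1$ sandwiched between $\pi$ and $\sigma$.

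For the inductive step, suppose $k \geq 3$. Then $3 \leq |\pi| \leq |\sigma|-3 \leq |\sigma|-2$, so Proposition~\ref{prop:proposition0.2} applies to the pair $(\pi,\sigma)$ and produces a simple permutation $\pi'$ with $\pi \preceq \pi' \preceq \sigma$ and $|\pi'| = |\pi|+2$. Because $|\pi'| < |\sigma|$ we have $\pi' \neq \sigma$, and hence $\pi' \prec \sigma$. Since $|\pi'| \geq |\pi| \geq 3$ and $|\sigma|-|\pi'| = k-2 < k$, the induction hypothesis applies to the pair $(\pi',\sigma)$ and yields a simple $\tau$ with $\pi' \preceq \tau \prec \sigma$ and $|\tau| = |\sigma|-1$. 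Transitivity of $\preceq$ gives $\pi \preceq \tau$, completing the induction.

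The genuine work has already been done in Theorem~\ref{thm:theorem0.3}; beyond it one only needs Proposition~\ref{prop:proposition0.2} to bridge the gap between $|\pi|$ and $|\pi|+2$, after which the result collapses to a short two-step induction. For that reason I do not foresee any serious obstacle in writing the argument; the only point to verify carefully is that the induction is actually on the gap $|\sigma|-|\pi|$ (not on $|\sigma|$), so that $\sigma$ itself never changes and the non-exceptional hypothesis is preserved when the induction hypothesis is invoked on $(\pi',\sigma)$.
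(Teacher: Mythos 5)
Your proof is correct and follows essentially the same route as the paper's: iterate Proposition~\ref{prop:proposition0.2} to climb from $\pi$ in steps of two until the gap to $\sigma$ is $1$ or $2$, and invoke Theorem~\ref{thm:theorem0.3} in the even case. Your packaging as an induction on the gap $|\sigma|-|\pi|$ with $\sigma$ held fixed is just a cleaner write-up of the paper's "using recursively" argument, and your care about preserving the non-exceptionality of $\sigma$ is exactly the point that matters.
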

\begin{proof}
We prove this result by induction on $|\sigma|-|\pi|$ using Proposition~\ref{prop:proposition0.2}. If $|\sigma|-|\pi|$ is odd, using recursively Proposition~\ref{prop:proposition0.2} we find a simple permutation $\tau$ 
such that $\pi \preceq \tau \preceq \sigma$ and $|\tau| = |\sigma|-1$. If $|\sigma|-|\pi|$ is even, we find a simple permutation $\tau'$ such that $\pi \preceq \tau' \preceq \sigma$ and $|\tau'| = |\sigma|-2$ and we apply Theorem~\ref{thm:theorem0.3} which ensure the existence of a simple permutation $\tau$ such that $\pi \preceq \tau' \preceq \tau \preceq \sigma$ and $|\tau| = |\sigma|-1$.
\end{proof}


\section{Simple permutations poset}\label{sec:poset}

We study the poset of simple permutations of size $\geq 4$ with respect to the pattern containment relation. We can represent this poset by an oriented graph $G$, whose vertices are the simple permutations and there is an edge from a simple permutation $\sigma$ to a simple permutation $\pi$ if and only if $\pi \prec \sigma$ and there is no simple permutation $\tau$ such that $\pi \prec \tau \prec \sigma$. Then $\pi \prec \sigma$ if and only if there is a path from $\sigma$ to $\pi$ in $G$. From Theorem~\ref{thm:main}, if $\sigma$ is not exceptional there is an edge from $\sigma$ to $\pi$ if and only if we can obtain $\pi$ from $\sigma$ by deleting one point, and from Proposition~\ref{prop:except} if $\sigma$ is exceptional, there is an edge from $\sigma$ to $\pi$ if and only if $\pi$ is exceptional of the same type of $\sigma$ and $|\sigma| = |\pi|+2$. In this section we study other properties of $G$.

\subsection{Paths in the simple permutations poset}

In the next theorem, we prove that if a simple permutation $\sigma$ has a simple pattern $\pi$, then there is a path in $G$ from $\sigma$ to $\pi$ in the graph whose first part consists of non exceptional simple permutations of consecutive sizes and second part of exceptional permutations (one of the parts can be empty). From Proposition~\ref{prop:except} it is obvious that reciprocally, all paths from $\sigma$ to $\pi$ are of this form.
Then we extend this result to prove that whenever $\sigma$ is not exceptionnal, there is such a path such that the second part of the path is empty, that is we can reach $\pi$ from $\sigma$ by deleting one element at a time and all involved permutations are simple.

\begin{thm}
\label{thm:chain}
Let $\pi \neq \sigma$ be simple permutations. If $\pi \preceq \sigma$ and $|\pi| \geq 3$, then there exists a chain of simple permutations $\sigma^{(0)} = \sigma, \sigma^{(1)}, \ldots, \sigma^{(k-1)}, \sigma^{(k)}=\pi$ and $m \in \{0 \dots k\}$ such that $\sigma^{(i)} \preceq \sigma^{(i-1)}$ and:
\begin{itemize}
\item $|\sigma^{(i-1)}| - |\sigma^{(i)}| = 1$ if $1 \leq i \leq m$,  
\item $|\sigma^{(i-1)}| - |\sigma^{(i)}| = 2$ if $m+1 \leq i \leq k$ 
\item if $m < k$ then $\sigma^{(i)}$ is exceptional for $m \leq i \leq k$.
\end{itemize}
\end{thm}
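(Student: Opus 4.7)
The plan is to do induction on $d = |\sigma| - |\pi|$. The two ingredients are Theorem~\ref{thm:main}, which produces a phase-1 step (size difference $1$ through simple permutations) whenever the top of the chain is non-exceptional, and Proposition~\ref{prop:except}, which completely determines the simple patterns of an exceptional permutation: they are exceptional of the same type and exist exactly at each even size from $3$ to $|\sigma|$. Together these already dictate the shape of the chain --- keep deleting one point at a time as long as we stay among non-exceptional permutations, and once we enter an exceptional permutation, switch to the forced phase-2 chain.

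For the base case $d = 1$ the chain $\sigma^{(0)} = \sigma$, $\sigma^{(1)} = \pi$ with $k = m = 1$ works. For the inductive step $d \geq 2$, suppose first that $\sigma$ is non-exceptional. Theorem~\ref{thm:main} supplies a simple $\tau$ with $|\tau| = |\sigma|-1$ and $\pi \preceq \tau \prec \sigma$. Since $d \geq 2$ we have $\tau \neq \pi$, so the induction hypothesis applied to $(\tau, \pi)$ furnishes a chain $\tau = \tau^{(0)}, \ldots, \tau^{(k')} = \pi$ with some index $m' \in \{0, \ldots, k'\}$ satisfying the three bullets. Setting $\sigma^{(0)} = \sigma$ and $\sigma^{(i)} = \tau^{(i-1)}$ for $1 \leq i \leq k'+1$, with $k = k'+1$ and $m = m'+1$, prepends a size-$1$ step and leaves the remaining bullets to inherit from the inner chain. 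When $m' = 0$ the inner chain is already entirely in phase 2, so $\tau = \tau^{(0)}$ is exceptional, which is exactly what the third bullet demands of $\sigma^{(m)} = \sigma^{(1)} = \tau$.

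If instead $\sigma$ is exceptional, Proposition~\ref{prop:except} forces $\pi$ to be exceptional of the same type as $\sigma$. Both $|\sigma|$ and $|\pi|$ are then even, so I set $k = (|\sigma|-|\pi|)/2$, $m = 0$, and take $\sigma^{(i)}$ to be the exceptional permutation of the same type as $\sigma$ of size $|\sigma|-2i$; each link $\sigma^{(i)} \preceq \sigma^{(i-1)}$ is again provided by Proposition~\ref{prop:except}, and every $\sigma^{(i)}$ is exceptional, so the three bullets hold with $m = 0 \leq k$.

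The only moderately delicate point is the welding step when $m' = 0$: one must notice that the inductive hypothesis, by forcing $\tau^{(0)}, \ldots, \tau^{(k')}$ to be exceptional, automatically gives the exceptionality of $\sigma^{(1)} = \tau$ required at the boundary $i = m$. The real substance of the theorem is concentrated in Theorem~\ref{thm:main} (and underlying it, Theorem~\ref{thm:theorem0.3}); once those are granted, the present statement is essentially a clean inductive packaging, with the phase transition handled uniformly by Proposition~\ref{prop:except}.
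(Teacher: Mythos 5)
Your proof is correct and follows essentially the same route as the paper: the paper iterates Theorem~\ref{thm:main} to peel off one point at a time while the current permutation is non-exceptional, then switches to the forced exceptional chain via Proposition~\ref{prop:except}, which is exactly your induction on $|\sigma|-|\pi|$ unrolled. The bookkeeping at the phase boundary (your $m'=0$ case) is handled correctly and matches the paper's treatment.
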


\begin{proof}
If $\sigma$ is exceptional, then $\pi$ is exceptional of the same type as $\sigma$ (Proposition~\ref{prop:except}). Then we set $m = 0$ and $k = (|\sigma|-|\pi|)/2$, and $\sigma^{(i)}$ are exceptional permutations of the same type as $\sigma$ and size between $|\pi|$ and $|\sigma|$.

If $\sigma$ is not exceptional, we set $\sigma^{(0)} = \sigma$, and we construct $\sigma^{(i)}$ by induction while $\sigma^{(i-1)}$ is not exceptional and $\pi \neq \sigma^{(i-1)}$: from Theorem~\ref{thm:main}, there exists a simple permutation $\sigma^{(i)}$ such that $\pi \preceq \sigma^{(i)} \preceq \sigma^{(i-1)}$ and $|\sigma^{(i)}|=|\sigma^{(i-1)}|-1$. We iterate until $\sigma^{(j)} = \pi$, then $m = k = |\sigma|-|\pi|$ and we have the result, or until $\sigma^{(j)}$ is exceptional. Then $\pi$ is exceptional of the same type as $\sigma^{(j)}$. Then we set $m = j$ and $k = m+(|\sigma^{(j)}|-|\pi|)/2$, and $\sigma^{(i)}$ for $j \leq i \leq k$ are exceptional permutations of the same type as $\pi$ and size between $|\pi|$ et $|\sigma^{(j)}|$.
\end{proof}

Note that the paths in $G$ between two simple permutations can be of different length. As example with $\sigma = 5263714$ and $\pi = 3142$, we have a path of length 3 (by $526314$ and $42613$, non exceptional, see Figure~\ref{fig:path3}) and a path of length 2 (by $415263$, exceptional, see Figure~\ref{fig:path2}).
\begin{figure}[ht]
\begin{center} 
\begin{tikzpicture}[scale=.2]
\permutation{5,2,6,3,7,1,4}
\end{tikzpicture}
\hspace{0.5cm}
\begin{tikzpicture}[scale=.2]
\permutation{5,2,6,3,1,4}
\end{tikzpicture}
\hspace{0.5cm}
\begin{tikzpicture}[scale=.2]
\permutation{4,2,5,1,3}
\end{tikzpicture}
\hspace{0.5cm}
\begin{tikzpicture}[scale=.2]
\permutation{3,1,4,2}
\end{tikzpicture}
\caption{Path of lenght 3 from $\sigma = 5263714$ to $\pi = 3142$.}\label{fig:path3}
\end{center}
\end{figure}
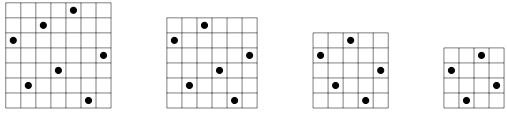
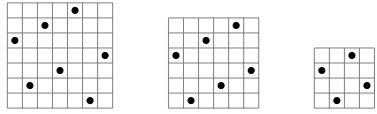
\begin{figure}[ht]
\begin{center} 
\begin{tikzpicture}[scale=.2]
\permutation{5,2,6,3,7,1,4}
\end{tikzpicture}
\hspace{0.5cm}
\begin{tikzpicture}[scale=.2]
\permutation{4,1,5,2,6,3}
\end{tikzpicture}
\hspace{0.5cm}
\begin{tikzpicture}[scale=.2]
\permutation{3,1,4,2}
\end{tikzpicture}
\caption{Path of lenght 2 from $\sigma = 5263714$ to $\pi = 3142$.}\label{fig:path2}
\end{center}
\end{figure}

But if $\sigma$ and $\pi$ are non exceptional, all path from $\sigma$ to $\pi$ have a length $|\sigma|-|\pi|$, and if $\sigma$ and $\pi$ are exceptional, all path from $\sigma$ to $\pi$ have a length $(|\sigma|-|\pi|)/2$.

The last case is $\sigma$ not exceptional and $\pi$ exceptional. Theorem~\ref{thm:1to1} prove that we can always choose a path with only one exceptional permutation: $\pi$.

\begin{prop}
\label{prop:motif-excep}
Let $\pi$ be an exceptional permutation of size $2(n+1)$ where $n \geq 2$, $P$ be a set of $n$ points of $\pi$ and $\pi'$ the exceptional permutation of size $2n$ of the same type as $\pi$. Then there exists a pattern $\pi'$ in $\pi$ which contain all points of $P$.
\end{prop}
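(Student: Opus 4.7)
My plan is to reduce the proposition to a pigeonhole argument. The key observation is that, for each of the four types of exceptional permutation, the $2(n+1)$ points of $\pi$ can be partitioned into $n+1$ pairs so that removing any one pair (and renormalising) yields the exceptional permutation of the same type and size $2n$; by Proposition~\ref{prop:TypeExceptional} that permutation is exactly $\pi'$. Once such a partition is in hand, the conclusion is immediate: $P$ has $n$ points and therefore meets at most $n$ of the $n+1$ pairs, so some pair is disjoint from $P$; deleting it produces a copy of $\pi'$ inside $\pi$ that contains every point of $P$.

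The first step is to describe the pairs explicitly, type by type, with $m=n+1$. For types $1$ and $2$, whose structure is organised by value (the odd values and the even values each form a monotone run of length $m$), I would take the pairs $\{2k-1,2k\}$, $1\le k\le m$, regarded as value pairs; removing such a pair drops one entry from each monotone run, and after renormalisation one obtains the exceptional permutation of the same type of size $2n$. For types $3$ and $4$, whose alternation is organised by position, I would take the pairs $\{2k-1,2k\}$, $1\le k\le m$, regarded as position pairs; removing such a pair deletes two entries that are paired in the alternation, leaving an exceptional permutation of the same type of size $2n$. The remark following the definition of exceptional permutations already covers the particular case $k=m$ (removing the symbols $2m-1,2m$ in types $1,2$ or the last two positions in types $3,4$); what the proof requires is merely the observation that every other $k$ behaves identically, which is a direct substitution into the defining formulas.

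With the partition established, a single application of the pigeonhole principle concludes the proof. I do not foresee a significant obstacle: the whole argument rests on the counting bound $n < n+1$, and the remaining work is the routine case-by-case verification of the removal property. The only minor asymmetry is that for types $1$ and $2$ the pairs consist of values while for types $3$ and $4$ they consist of positions, so the verifications are phrased slightly differently; but the underlying idea is uniform across the four cases.
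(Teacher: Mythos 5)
Your proof is correct and follows essentially the same route as the paper: the paper also deletes a pair of consecutive positions (working with type 3 and invoking symmetry for the other types, which is exactly your value-pairs/position-pairs distinction) and concludes by the same pigeonhole argument that a set of $n$ points cannot meet all $n+1$ disjoint pairs.
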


\begin{proof}
We have only to prove the result for exceptional permutations of type $3$, the result for the other types follows by symmetry. Suppose that $\pi$ is of type $3$. Then by deleting two points of $\pi$ of consecutive indices, we obtain a pattern $\pi'$. So we have only to prove that there exists two points of consecutive indices which are not in $P$. If it doesn't exist two points of consecutive indices which are not in $P$, then $P$ contains at least half of points of $\pi$ i.e. $n+1$ points, a contradiction.
\end{proof}

\begin{prop}
\label{propF}
Let $\pi' \prec \pi$ be exceptional permutations with $|\pi'|=|\pi|-2$ and $\sigma$ a non exceptional simple permutation such that $\pi \prec \sigma$ and $|\sigma| = |\pi|+1$. Then there exists a simple permutation $\tau$ such that $\pi' \prec \tau \prec \sigma$ and $|\tau| = |\pi'|+1$.
\end{prop}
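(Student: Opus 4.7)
The plan is to apply Theorem~\ref{thm:main} twice, and fall back on a direct construction guided by Proposition~\ref{prop:motif-excep} in the single case where the descent is obstructed. Since $\pi'$ is exceptional we have $|\pi'|\ge 4$, so Theorem~\ref{thm:main} applied to $\pi'\prec\sigma$ yields a simple $\mu$ with $\pi'\preceq\mu\prec\sigma$ and $|\mu|=|\pi|$. If $\mu$ is not exceptional, a second application of Theorem~\ref{thm:main} to $\pi'\prec\mu$ produces the desired $\tau$: simple, of size $|\mu|-1=|\pi'|+1$, and satisfying $\pi'\preceq\tau\prec\mu\prec\sigma$.

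Suppose instead that $\mu$ is exceptional. Proposition~\ref{prop:TypeExceptional} applied to $\pi'\preceq\mu$ and to $\pi'\prec\pi$ shows that $\mu$ and $\pi$ are exceptional of the same type; as exceptional permutations of a given type and size are unique and $|\mu|=|\pi|$, this forces $\mu=\pi$. Fix a point $\sigma_k\in\sigma$ with $\sigma\setminus\{\sigma_k\}\cong\pi$. Since $|\pi|$ is even, $|\pi|-1$ is odd, and Proposition~\ref{prop:exceptional} prevents $\pi$ from having any simple pattern of size $|\pi|-1$; consequently every candidate simple $\tau$ of size $|\pi|-1$ with $\tau\preceq\sigma$ must use $\sigma_k$ and therefore has the form $\tau=\sigma\setminus\{\sigma_a,\sigma_b\}$ with $\sigma_a,\sigma_b$ inside the copy of $\pi$. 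By type symmetry I may assume $\pi$ is of type~$3$; Proposition~\ref{prop:motif-excep} then tells me that for each of the $|\pi|-1$ pairs $(\pi_a,\pi_b)$ of consecutive-index points of $\pi$ we have $\pi\setminus\{\pi_a,\pi_b\}\cong\pi'$, so the candidate $\tau_{a,b}:=\sigma\setminus\{\sigma_a,\sigma_b\}$ contains $\pi'$ and has size $|\pi'|+1$. It suffices to produce one $(a,b)$ for which $\tau_{a,b}$ is simple.

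Arguing by contradiction, assume every $\tau_{a,b}$ is non-simple. Since $\tau_{a,b}\setminus\{\sigma_k\}\cong\pi'$ is simple, Proposition~\ref{prop:intervalleOuCoin} forces $\sigma_k$ to belong to a $2$-interval of $\tau_{a,b}$ or to lie in a corner of its graphical representation. Translated to $\sigma$, each such witness pins a point of $\sigma$ at position- and value-distance at most $3$ from $\sigma_k$, with the intermediate positions and values being exactly those of the deleted pair. Sweeping $(a,b)$ over the admissible pairs and mimicking the inductive case analysis of cases~2 and~3 in the proof of Theorem~\ref{thm:theorem0.3}, these accumulated constraints line up the non-$\sigma_k$ points of $\sigma$ together with $\sigma_k$ into a parallel or wedge alternation. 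Since only exceptional permutations are simple among such alternations, $\sigma$ would be either non-simple or exceptional, contradicting the hypotheses. The principal obstacle lies exactly in this final sweep: one has to track which of the two Proposition~\ref{prop:intervalleOuCoin} alternatives holds for each pair $(a,b)$ and verify that they propagate consistently with the type-$3$ layout of $\pi$ inside $\sigma$, rather than collapsing the argument halfway through.
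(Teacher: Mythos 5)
Your first case (the intermediate permutation $\mu$ of size $|\pi|$ given by Theorem~\ref{thm:main} being non-exceptional) is handled correctly and is a legitimate shortcut: two applications of Theorem~\ref{thm:main} do produce the required $\tau$ there. But the heart of the proposition is precisely the remaining case, where the only simple permutation of size $|\pi|$ sandwiched between $\pi'$ and $\sigma$ is (an occurrence of) $\pi$ itself, and there your argument is not a proof. You correctly reduce to finding one pair $\{\sigma_a,\sigma_b\}$ inside the copy of $\pi$ whose deletion leaves a simple permutation containing $\pi'$, but the existence of such a pair is exactly what has to be established, and you replace that step by an announced ``sweep'' over all consecutive-index pairs that you do not carry out --- you even flag it yourself as ``the principal obstacle.'' The claimed contradiction is also unconvincing as stated: the non-$\sigma_k$ points of $\sigma$ already form an exceptional (hence parallel or wedge) alternation by hypothesis, so asserting that the accumulated failures ``line up $\sigma$ into a parallel or wedge alternation'' does not by itself contradict anything; the whole question is where $\sigma_k$ sits relative to that alternation, and no constraint on $\sigma_k$ is actually derived.

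The paper avoids any such contradiction argument by a direct construction. Writing $\pi=\sigma\setminus\{\sigma_k\}$, it takes the neighbours $\sigma_i,\sigma_j$ of $\sigma_k$ (in position or in value) and, using simplicity of $\sigma$, points $\sigma_{i'},\sigma_{j'}$ separating $\sigma_k$ from each of them; Proposition~\ref{prop:motif-excep} then supplies a copy of $\pi'$ inside $\pi$ containing these four points (this is where $|\pi|\geq 10$ is needed), and $\tau$ is that copy together with $\sigma_k$. Simplicity of $\tau$ is then immediate from Proposition~\ref{prop:intervalleOuCoin}: $\sigma_k$ cannot lie in a corner because $\sigma_i$ and $\sigma_j$ flank it inside $\tau$, and it cannot form a $2$-interval because its only possible partners $\sigma_i,\sigma_j$ are separated from it by $\sigma_{i'},\sigma_{j'}$, which were kept in $\tau$. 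The cases $|\pi|\in\{6,8\}$, where four points cannot be protected, are settled by exhaustive verification --- a boundary issue your proposal does not address either. To repair your proof you would need to replace the sweep by an explicit choice of the deleted pair (for instance, a consecutive-index pair avoiding $\sigma_i,\sigma_j,\sigma_{i'},\sigma_{j'}$), at which point you would essentially be reproducing the paper's argument.
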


\begin{proof}
Let $n = |\sigma| = |\pi|+1$, as $\pi \prec \sigma$ there exists an index $k \in \{1 \dots n\}$ such that we obtain $\pi$ from $\sigma$ by deleting $\sigma_k$. As $\sigma$ is simple, we can't have ($k \in \{1, n\}$ and $\sigma_k \in \{1, n\}$). So there exist $i$ and $j$ in $\{1 \dots n\}$ such that $i = k-1$ and $j = k+1$, or $\sigma_i = \sigma_k-1$ and $\sigma_j = \sigma_k+1$. As $\sigma$ is simple, There exists a point $\sigma_{i'}$ which separates $\sigma_i$ from $\sigma_k$ and a point $\sigma_{j'}$ which separates $\sigma_j$ from $\sigma_k$.

If $|\pi| \geq 10$, from Proposition~\ref{prop:motif-excep} there exists a pattern $\pi'$ in $\pi$ (thus in $\sigma$) which contains $\sigma_i$, $\sigma_j$, $\sigma_{i'}$ and $\sigma_{j'}$. Let $\tau$ be the permutation obtained from this pattern $\pi'$ in $\sigma$ and point $\sigma_k$. Then $\pi' \prec \tau \prec \sigma$ and $|\tau| = |\pi'|+1$. Moreover if $\tau$ were not simple, as $\pi' = \tau \setminus \{\sigma_k\}$ is simple, from Proposition~\ref{prop:intervalleOuCoin} $\sigma_k$ belongs to an interval $I$ of size $2$ of $\tau$ ($\sigma_k$ is not in a corner of the graphical representation of $\tau$ because $i < k < j$ or $\sigma_i < \sigma_k < \sigma_j$). Set $I = \{\sigma_{\ell}, \sigma_k\}$, then $\ell =i$ or $\ell =j$, excluded because $\sigma_{i'}$ separates $\sigma_i$ from $\sigma_k$ and $\sigma_{j'}$ separate $\sigma_j$ from $\sigma_k$. Thus $\tau$ is simple and we have the result expected.

As $\pi'$ is exceptional, $|\pi'| \geq 4$ so it remains only to prove the result when $|\pi|=6$ or $|\pi|=8$. We can prove by exhaustive verification that in this case there also exists a set $P$ of points of $\pi$ building a pattern $\pi'$ such that $\tau = P \cup \sigma_k$ fulfil our proposition.

\end{proof}

%

\begin{thm}\label{thm:1to1}
Let $\sigma \not= \pi$ be two simple permutations, $\sigma$ non exceptional and $\ell = |\sigma|-|\pi|$. If $\pi \preceq \sigma$ and $|\pi| \geq 3$ then there exists a chain of simple permutations $\sigma^{(0)} = \sigma, \sigma^{(1)}, \ldots, \sigma^{(\ell-1)}, \sigma^{(\ell)}=\pi$ such that $\forall i$, $\sigma^{(i)} \preceq \sigma^{(i-1)}$ and $|\sigma^{(i-1)}| - |\sigma^{(i)}| = 1$. 
\end{thm}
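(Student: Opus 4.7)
The plan is to prove Theorem~\ref{thm:1to1} by strong induction on $\ell = |\sigma| - |\pi| \geq 1$. The base case $\ell = 1$ is immediate, the chain being simply $\sigma,\pi$. For the inductive step with $\ell \geq 2$, I would first invoke Theorem~\ref{thm:main} on the pair $(\sigma,\pi)$ to produce a simple permutation $\tau_0$ satisfying $\pi \preceq \tau_0 \prec \sigma$ and $|\tau_0| = |\sigma|-1$. Two cases then arise according to whether $\tau_0$ is exceptional.

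If $\tau_0$ is non-exceptional, the induction hypothesis applies directly to $(\tau_0,\pi)$ since $|\tau_0|-|\pi| = \ell-1 < \ell$, and prepending $\sigma$ to the resulting chain finishes the argument. The delicate case is when $\tau_0$ is exceptional. Then Proposition~\ref{prop:except} applied to $\tau_0$ forces $\pi$ to be exceptional of the same type as $\tau_0$; in particular $|\pi|$ and $|\tau_0|$ are even, so $|\sigma|$ is odd, $\ell$ is odd, and hence $\ell \geq 3$ and $|\sigma|\geq 7$.

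To sidestep this obstruction, let $\rho$ be the exceptional permutation of the same type as $\pi$ and size $|\sigma|-3 \geq 4$. By Proposition~\ref{prop:TypeExceptional} we have $\pi \preceq \rho$ and $\rho \prec \tau_0$. I now apply Proposition~\ref{propF} with $\sigma$ (non-exceptional), $\tau_0$ (exceptional of size $|\sigma|-1$) and $\rho$ (exceptional of size $|\tau_0|-2$) playing the roles of $\sigma,\pi,\pi'$ respectively. This yields a simple permutation $\tau_1$ with $\rho \prec \tau_1 \prec \sigma$ and $|\tau_1| = |\sigma|-2$. Since $|\tau_1|$ is odd, $\tau_1$ cannot be exceptional, and from $\pi \preceq \rho \prec \tau_1$ we get $\pi \prec \tau_1$.

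A second application of Theorem~\ref{thm:main}, now to the pair $(\sigma,\tau_1)$, produces a simple $\hat\sigma$ with $\tau_1 \preceq \hat\sigma \prec \sigma$ and $|\hat\sigma| = |\sigma|-1$. Because $\tau_1$ is non-exceptional and $\tau_1 \preceq \hat\sigma$, Proposition~\ref{prop:except} forces $\hat\sigma$ to be non-exceptional as well, and $\pi \preceq \tau_1 \preceq \hat\sigma$. Finally, since $|\hat\sigma|-|\pi| = \ell-1$, the induction hypothesis applied to $(\hat\sigma,\pi)$ delivers a size-one-at-a-time chain from $\hat\sigma$ down to $\pi$, and prepending $\sigma$ completes the construction. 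The main obstacle is exactly the reason Proposition~\ref{propF} is needed: by Proposition~\ref{prop:size-1} an exceptional permutation has no simple pattern of size one less, so a blind iteration of Theorem~\ref{thm:main} could dead-end at an exceptional $\tau_0$; Proposition~\ref{propF} provides the non-exceptional companion $\tau_1$ of size $|\sigma|-2$ from which Theorem~\ref{thm:main} can re-enter the non-exceptional layer of size $|\sigma|-1$.
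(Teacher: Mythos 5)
Your proof is correct, and it relies on exactly the same key ingredients as the paper: Proposition~\ref{propF} to jump from the exceptional permutation of size $|\sigma|-3$ up to a non-exceptional simple permutation of size $|\sigma|-2$ inside $\sigma$, followed by Theorem~\ref{thm:theorem0.3} (via Theorem~\ref{thm:main}) to re-enter the level $|\sigma|-1$, with Proposition~\ref{prop:except} supplying the parity and non-exceptionality bookkeeping. The packaging differs: the paper starts from a chain produced by Theorem~\ref{thm:chain}, takes one with the one-step prefix of maximal length $m$, and derives a contradiction from $m<k$ by splicing in the pair given by Proposition~\ref{propF} and Theorem~\ref{thm:theorem0.3}; you instead run a direct strong induction on $\ell=|\sigma|-|\pi|$ and perform the same local repair only when the first descent step lands on an exceptional permutation, never invoking Theorem~\ref{thm:chain}. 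Your version is arguably slightly more self-contained and makes the parity constraints ($|\sigma|$ odd, $\ell\geq 3$, $|\tau_1|$ odd hence non-exceptional) explicit, whereas the paper's extremal argument gets the well-foundedness for free from the finiteness of the chain; both are valid, and all the side conditions you need (existence of $\rho$ of size $|\sigma|-3\geq 4$, $\pi\preceq\rho$, non-exceptionality of $\hat\sigma$) check out.
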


\begin{proof}
Let $(\sigma^{(i)})_{0\leq i\leq k}$ be a chain of simple permutations given by Theorem~\ref{thm:chain} with $m$ maximum. If $m < k$ then $\sigma^{(m)}$ and $\sigma^{(m+1)}$ are exceptional, and $\sigma^{(m-1)}$ is not exceptional ($m > 0$ because $\sigma$ is not exceptional). We can apply Proposition~\ref{propF}, and we have a simple permutation $\tau$ such that $\sigma^{(m+1)} \preceq \tau \preceq \sigma^{(m-1)}$ and $|\tau| = |\sigma^{(m+1)}|+1$. By Theorem~\ref{thm:theorem0.3}, we have a simple permutation $\rho$ such that $\tau \preceq \rho \preceq \sigma^{(m-1)}$. Then we set $\pi^{(i)} = \sigma^{(i)}$ if $1 \leq i \leq m-1$, $\pi^{(m)} = \rho$, $\pi^{(m+1)} = \tau$ and $\pi^{(i)} = \sigma^{(i-1)}$ if $m+2 \leq i \leq k+1$. Then we have $|\pi^{(i-1)}| - |\pi^{(i)}| = 1$ if $1 \leq i \leq m+1$, which give us a chain of simple permutations verifying conditions of Theorem~\ref{thm:chain} so $m$ is not maximum, a contradiction. Thus $m = k$ and we have the result expected.
\end{proof}

\begin{center}
\begin{figure}[H]\tiny
\begin{tikzpicture}[xscale=1.37,line width=1pt, drop shadow]
\tikzstyle{every node}=[draw,fill=white,rectangle,rounded corners=2,drop shadow,line width=0.5pt];
\tikzstyle{exceptional}=[fill=red!20, drop shadow];
\draw (0,8) node (2 7 4 8 1 6 3 5 ) {$2\,7\,4\,8\,1\,6\,3\,5\,$};
\draw (-2.0,7) node (2 4 7 1 6 3 5 ) {$2\,4\,7\,1\,6\,3\,5\,$};
\draw (2 4 7 1 6 3 5 .north) --  (2 7 4 8 1 6 3 5 .south);
\draw (-1.0,7) node (2 6 4 7 1 3 5 ) {$2\,6\,4\,7\,1\,3\,5\,$};
\draw (2 6 4 7 1 3 5 .north) --  (2 7 4 8 1 6 3 5 .south);
\draw (0.0,7) node (2 6 4 7 1 5 3 ) {$2\,6\,4\,7\,1\,5\,3\,$};
\draw (2 6 4 7 1 5 3 .north) --  (2 7 4 8 1 6 3 5 .south);
\draw (1.0,7) node (2 7 4 1 6 3 5 ) {$2\,7\,4\,1\,6\,3\,5\,$};
\draw (2 7 4 1 6 3 5 .north) --  (2 7 4 8 1 6 3 5 .south);
\draw (2.0,7) node (6 3 7 1 5 2 4 ) {$6\,3\,7\,1\,5\,2\,4\,$};
\draw (6 3 7 1 5 2 4 .north) --  (2 7 4 8 1 6 3 5 .south);
\draw (-4.0,6) node (2 4 1 6 3 5 ) {$2\,4\,1\,6\,3\,5\,$};
\draw (2 4 1 6 3 5 .north) --  (2 4 7 1 6 3 5 .south);
\draw (2 4 1 6 3 5 .north) --  (2 7 4 1 6 3 5 .south);
\draw (-3.0,6) node[exceptional] (2 4 6 1 3 5 ) {$2\,4\,6\,1\,3\,5\,$};
\draw (2 4 6 1 3 5 .north) --  (2 4 7 1 6 3 5 .south);
\draw (2 4 6 1 3 5 .north) --  (2 6 4 7 1 3 5 .south);
\draw (-2.0,6) node (2 4 6 1 5 3 ) {$2\,4\,6\,1\,5\,3\,$};
\draw (2 4 6 1 5 3 .north) --  (2 4 7 1 6 3 5 .south);
\draw (2 4 6 1 5 3 .north) --  (2 6 4 7 1 5 3 .south);
\draw (-1.0,6) node (2 5 3 6 1 4 ) {$2\,5\,3\,6\,1\,4\,$};
\draw (2 5 3 6 1 4 .north) --  (2 6 4 7 1 3 5 .south);
\draw (2 5 3 6 1 4 .north) --  (2 6 4 7 1 5 3 .south);
\draw (0.0,6) node (2 6 4 1 3 5 ) {$2\,6\,4\,1\,3\,5\,$};
\draw (2 6 4 1 3 5 .north) --  (2 6 4 7 1 3 5 .south);
\draw (2 6 4 1 3 5 .north) --  (2 7 4 1 6 3 5 .south);
\draw (1.0,6) node (2 6 4 1 5 3 ) {$2\,6\,4\,1\,5\,3\,$};
\draw (2 6 4 1 5 3 .north) --  (2 6 4 7 1 5 3 .south);
\draw (2 6 4 1 5 3 .north) --  (2 7 4 1 6 3 5 .south);
\draw (2.0,6) node (3 6 1 5 2 4 ) {$3\,6\,1\,5\,2\,4\,$};
\draw (3 6 1 5 2 4 .north) --  (2 4 7 1 6 3 5 .south);
\draw (3 6 1 5 2 4 .north) --  (6 3 7 1 5 2 4 .south);
\draw (3.0,6) node (5 2 6 4 1 3 ) {$5\,2\,6\,4\,1\,3\,$};
\draw (5 2 6 4 1 3 .north) --  (6 3 7 1 5 2 4 .south);
\draw (4.0,6) node (5 3 6 1 4 2 ) {$5\,3\,6\,1\,4\,2\,$};
\draw (5 3 6 1 4 2 .north) --  (2 6 4 7 1 5 3 .south);
\draw (5 3 6 1 4 2 .north) --  (6 3 7 1 5 2 4 .south);
\draw (-2.0,5) node (2 4 1 5 3 ) {$2\,4\,1\,5\,3\,$};
\draw (2 4 1 5 3 .north) --  (2 4 1 6 3 5 .south);
\draw (2 4 1 5 3 .north) --  (2 4 6 1 5 3 .south);
\draw (2 4 1 5 3 .north) --  (2 6 4 1 5 3 .south);
\draw (-1.0,5) node (2 5 3 1 4 ) {$2\,5\,3\,1\,4\,$};
\draw (2 5 3 1 4 .north) --  (2 5 3 6 1 4 .south);
\draw (2 5 3 1 4 .north) --  (2 6 4 1 3 5 .south);
\draw (2 5 3 1 4 .north) --  (2 6 4 1 5 3 .south);
\draw (0.0,5) node (3 1 5 2 4 ) {$3\,1\,5\,2\,4\,$};
\draw (3 1 5 2 4 .north) --  (2 4 1 6 3 5 .south);
\draw (3 1 5 2 4 .north) --  (3 6 1 5 2 4 .south);
\draw (1.0,5) node (3 5 1 4 2 ) {$3\,5\,1\,4\,2\,$};
\draw (3 5 1 4 2 .north) --  (2 4 6 1 5 3 .south);
\draw (3 5 1 4 2 .north) --  (3 6 1 5 2 4 .south);
\draw (3 5 1 4 2 .north) --  (5 3 6 1 4 2 .south);
\draw (2.0,5) node (4 2 5 1 3 ) {$4\,2\,5\,1\,3\,$};
\draw (4 2 5 1 3 .north) --  (2 5 3 6 1 4 .south);
\draw (4 2 5 1 3 .north) --  (5 2 6 4 1 3 .south);
\draw (4 2 5 1 3 .north) --  (5 3 6 1 4 2 .south);
\draw (-0.5,4) node[exceptional] (2 4 1 3 ) {$2\,4\,1\,3\,$};
\draw (2 4 1 3 .north) --  (2 4 1 5 3 .south);
\draw (2 4 1 3 .north) --  (2 5 3 1 4 .south);
\draw (2 4 1 3 .north) --  (3 1 5 2 4 .south);
\draw (2 4 1 3 .north) --  (3 5 1 4 2 .south);
\draw (2 4 1 3 .north) --  (4 2 5 1 3 .south);
\draw (0.5,4) node[exceptional] (3 1 4 2 ) {$3\,1\,4\,2\,$};
\draw (3 1 4 2 .north) --  (2 4 1 5 3 .south);
\draw (3 1 4 2 .north) --  (3 1 5 2 4 .south);
\draw (3 1 4 2 .north) --  (3 5 1 4 2 .south);
\draw (3 1 4 2 .north) --  (4 2 5 1 3 .south);
\draw [red,dotted] (2 4 6 1 3 5 .south) .. controls +(0,-1) and +(-1,0) .. (2 4 1 3 .west);
\end{tikzpicture}
\caption{Poset of simple permutations which are pattern of $2\,7\,4\,8\,1\,6\,3\,5$. Colored node are exceptional ones.}\label{fig:poset}
\end{figure}
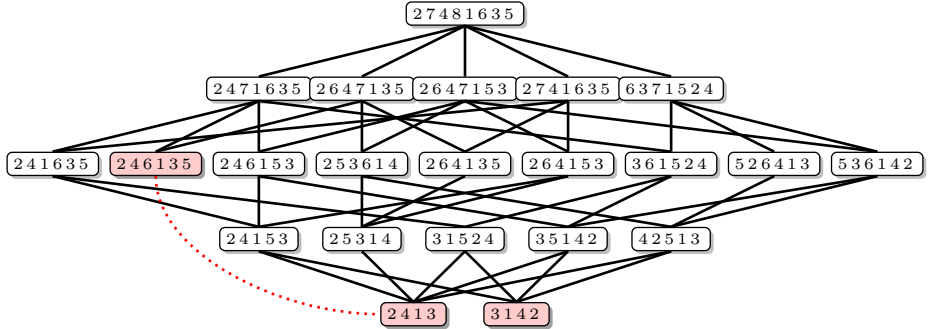
\end{center}

\subsection{Degree of vertices in the poset}

The preceding section proves that the poset is ranked if we omit exceptionnal permutations meaning that each level of the poset corresponds to simple permutations of given size and there exist only edges between permutations of contiguous ranks. In this section, we study the possible edges between two contiguous levels which  provide statistics on simple permutations. More precisely, Proposition~\ref{prop:size-1} proves that if $\sigma$ is a non-exceptional simple permutation, then it exists a simple permutation $\sigma'$ of size $|\sigma|-1$ such that $\sigma' \prec \sigma$. In other words, there exists a point $\sigma_i$ of $\sigma$ such that the permutation obtained when deleting $\sigma_i$ and renormalizing is simple. But how many points of $\sigma$ have this property? To answer this question, we  look at the -possibly multi-graph where vertices are simple permutations and as many edges 
between $\sigma$ and $\sigma'$ as the number of possible ways to insert an element in $\sigma'$ to obtain $\sigma$. Proving that this indeed is a graph shows that our question is equivalent to counting the number of edges between two consecutive levels in the original poset.

%


\begin{prop}
\label{simple->simple}
Let $\sigma = \sigma_{1}\sigma_{2}\dots \sigma_{n}$ be a simple permutation and $\tau$ be a simple permutation of size $n+1$ such that $\sigma \preceq \tau$. Then there exists only one way to obtain $\tau$ by adding a point in $\sigma$.
\end{prop}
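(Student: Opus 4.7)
I would argue by contradiction. Suppose there exist distinct indices $i < j$ in $\{1,\ldots,n+1\}$ such that $\tau \setminus \{\tau_i\}$ and $\tau \setminus \{\tau_j\}$ are both order-isomorphic to $\sigma$ (hence to each other). The main step is to show that $\{i, i+1, \ldots, j\}$ is an interval of $\tau$, after which simplicity of $\tau$ finishes everything off.

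The two deletion sequences agree at all positions $p<i$ and at all positions $p\geq j$, while for $i\leq p\leq j-1$ the sequence $\tau\setminus\{\tau_i\}$ has entry $\tau_{p+1}$ and $\tau\setminus\{\tau_j\}$ has entry $\tau_p$. Equality of the two patterns, applied to pairs of positions $(q,p)$ with $q\notin\{i,i+1,\ldots,j\}$ and $p\in\{i,\ldots,j-1\}$, then forces the value $\tau_q$ not to lie strictly between $\tau_p$ and $\tau_{p+1}$. I would next run a discrete intermediate-value argument on the integer walk $\tau_i,\tau_{i+1},\ldots,\tau_j$: if some $\tau_q$ with $q\notin\{i,\ldots,j\}$ sat strictly inside $(\min\{\tau_i,\ldots,\tau_j\},\, \max\{\tau_i,\ldots,\tau_j\})$, then the walk would take values both below and above $\tau_q$ and some consecutive step would have to straddle $\tau_q$, contradicting the previous observation. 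Hence $\{\tau_i,\ldots,\tau_j\}$ is a block of consecutive integers, and combined with the already-consecutive positions $\{i,\ldots,j\}$ it is an interval of $\tau$.

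Since $\tau$ is simple, this interval must be trivial. It is not a singleton because $i<j$, so $i=1$ and $j=n+1$ and the interval fills all of $\tau$. In that case $(\tau_1,\ldots,\tau_n)$ and $(\tau_2,\ldots,\tau_{n+1})$ share the pattern $\sigma$, so $\tau_k\mapsto\tau_{k+1}$ is the unique order-preserving bijection between $\{1,\ldots,n+1\}\setminus\{\tau_{n+1}\}$ and $\{1,\ldots,n+1\}\setminus\{\tau_1\}$. A short case split on whether $\tau_1<\tau_{n+1}$ or $\tau_1>\tau_{n+1}$, writing this bijection explicitly and using $\tau_{k+1}\neq\tau_k$, forces $\tau$ to be either $1\,2\,\cdots\,(n+1)$ or its reverse, neither of which is simple when $|\tau|\geq 3$, contradicting the hypothesis on $\tau$.

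The main obstacle will be the interval-extraction step: setting up the position-by-position comparison between the two deletions carefully enough to read off the correct constraints, and then running the integer-walk intermediate-value argument so that it covers uniformly both $j=i+1$ (the degenerate size-two case, where it essentially collapses to a direct rank computation yielding $|\tau_i-\tau_{i+1}|=1$) and $j>i+1$ (where the walk argument is genuinely needed). Once the interval is identified, simplicity of $\tau$ does most of the rest of the work, with only the final monotone-$\tau$ subcase requiring a brief direct check.
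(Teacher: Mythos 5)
Your proof is correct, but it takes a genuinely different route from the paper's. The paper argues locally: it writes the two hypothetical insertions explicitly as words, with new values $a\neq b$ inserted at positions $i<k$ of $\tau$, and compares the two expressions for $\tau$ entry by entry. If the positions are non-adjacent ($i<k-1$), the single entry $\tau_{i+1}$ is simultaneously a renormalization of $\sigma_i$ and of $\sigma_{i+1}$, forcing $|\sigma_i-\sigma_{i+1}|\leq 1$ and contradicting the simplicity of $\sigma$; if they are adjacent ($i=k-1$), the two inserted points are adjacent in both position and value, so $\{\tau_i,\tau_{i+1}\}$ is a non-trivial interval of $\tau$. That is shorter, but it uses the simplicity of both $\sigma$ and $\tau$. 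Your argument instead extracts from the order-isomorphism of the two deletions a global interval $\{i,\dots,j\}$ of $\tau$ (the common entries cannot separate any step of the walk $\tau_i,\dots,\tau_j$, whence the discrete intermediate-value argument), lets simplicity of $\tau$ collapse that interval to the whole permutation, and finishes by showing the shift map $\tau_k\mapsto\tau_{k+1}$ would force $\tau$ to be monotone. This costs you the intermediate-value step and the final case split, but it buys something real: you never invoke the simplicity of $\sigma$, only that of $\tau$, so you in fact prove the stronger statement that the $n+1$ one-point deletions of a simple permutation of size $n+1$ are pairwise distinct as patterns, with no hypothesis on the deleted pattern. Both proofs carry the same implicit size restriction (the statement fails for $\sigma=1$, $\tau=12$; one needs $n\geq 3$ so that adjacent entries of a simple permutation differ by at least $2$ and monotone permutations are not simple), which is harmless for the way the proposition is used in Section~\ref{sec:poset}.
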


\begin{proof}
Suppose that there exist at least 2 ways to do it. Then there are integers $a \neq b$ and $i<k$ such that:\\
$\tau = \sigma'_{1} \dots \sigma'_{i-1}\ a\ \sigma'_{i} \dots \sigma'_{k-1}\ \sigma'_{k} \dots \sigma'_{n}$ and\\
$\tau = \sigma''_{1}\dots\sigma''_{i-1}\ \sigma''_{i}\dots \sigma''_{k-1}\ b\ \sigma''_{k}\dots\sigma''_{n}$\\
where $\sigma'_{j}=\left\{
          \begin{array}{ll}
            \sigma_{j} & \mathrm{if}\ \sigma_{j} < a\\
            \sigma_{j}+1 & \mathrm{otherwise} \\
          \end{array}
        \right.$
and $\sigma''_{j}=\left\{
          \begin{array}{ll}
            \sigma_{j} & \mathrm{if}\ \sigma_{j} < b\\
            \sigma_{j}+1 & \mathrm{otherwise.} \\
          \end{array}
        \right.$

In particular the equality between these two ways to write $\tau$ implies that if $i < k-1$, then $\sigma'_{i} = \tau_{i+1} = \sigma''_{i+1}$, which is impossible because $\sigma$ is simple so $|\sigma_{i}-\sigma_{i+1}| \geq 2$. Thus $i = k-1$, but then the equality implies that $a = \tau_i =\sigma''_i$ and $b = \tau_{i+1} =\sigma'_i$, so $\{a, b\} = \{\sigma_i, \sigma_i+1\}$, which is impossible because $\tau$ is simple. Consequently there is only one way to write $\tau$ from $\sigma$.
\end{proof}

Recalling that $G$ is the graph representing the poset of simple permutations defined at the beginning of Section~\ref{sec:poset}, we consider the graph $G_1$ obtained from $G$ by deleting edges between two exceptional permutations: note that there is an edge in $G_1$ from a simple permutation $\sigma$ to a simple permutation $\pi$ if and only if we can obtain $\pi$ from $\sigma$ by deleting one point.

\begin{defn}
Let $\pi$ be a simple permutation. We define the set of parents $S_{\pi+}$ -resp. children   $S_{\pi-}$ - of $\pi$ in $G_1$ by:\\
$S_{\pi+} = \{\sigma\ |\ \sigma$ is simple, $\pi \preceq \sigma$ and $|\sigma| = |\pi|+1\}$ and\\
$S_{\pi-} = \{\sigma\ |\ \sigma$ is simple, $\sigma \preceq \pi$ and $|\sigma| = |\pi|-1\}$
\end{defn}

\begin{prop}
\label{indegree}
Let $\pi$ be a simple permutation of size $n$. \\Then ${|S_{\pi+}| = (n+1)(n-3)}$.
\end{prop}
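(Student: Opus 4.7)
My plan is to enumerate simple permutations $\sigma$ of size $n+1$ covering $\pi$ by counting insertions. Proposition~\ref{simple->simple} guarantees that two distinct insertions of a single point into $\pi$ always yield two distinct permutations, so $|S_{\pi+}|$ equals the number of pairs $(k,v) \in \{1,\ldots,n+1\}^{2}$ such that inserting value $v$ at position $k$ of $\pi$ gives a simple permutation $\sigma$. There are $(n+1)^{2}$ candidate insertions in total, so the strategy is to show that exactly $4(n+1)$ of them are \emph{bad}, i.e.\ produce a non-simple $\sigma$; then $|S_{\pi+}| = (n+1)^{2} - 4(n+1) = (n+1)(n-3)$.

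To identify the bad insertions I apply Proposition~\ref{prop:intervalleOuCoin}: since $\pi = \sigma \setminus \{\sigma_{k}\}$ is simple, whenever $\sigma$ is non-simple the inserted point must lie in a corner of $\sigma$ or belong to a $2$-interval of $\sigma$. Conversely, either condition immediately forces a non-trivial interval in $\sigma$, so the bad insertions decompose into three (a priori overlapping) classes: a corner class $C$, a left-neighbour class $L$ (the inserted point forms a $2$-interval with $\sigma_{k-1}$), and a right-neighbour class $R$ (with $\sigma_{k+1}$). The class $C$ consists of the four insertions with $(k,v) \in \{1,n+1\}^{2}$. For $L$, a direct check of the value shift -- $\pi_{k-1}$ becomes $\pi_{k-1}$ or $\pi_{k-1}+1$ according as $\pi_{k-1} < v$ -- shows that the $2$-interval condition with $\sigma_{k-1}$ is equivalent to $v \in \{\pi_{k-1},\pi_{k-1}+1\}$, giving $2n$ insertions (two values of $v$ for each $k \in \{2,\ldots,n+1\}$). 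The class $R$ is symmetric and also contributes $2n$ insertions.

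The key remaining step is to verify that $L$, $R$, and $C$ are pairwise disjoint. If $(k,v) \in L \cap R$, then $\{\pi_{k-1},\pi_{k}\} \subseteq \{v-1,v\}$, so $\{\pi_{k-1},\pi_{k}\}$ is a non-trivial $2$-interval of $\pi$, contradicting the simplicity of $\pi$ (which has size $n \geq 4$). For $L \cap C$, the corner constraint forces $k = n+1$ and $v \in \{1,n+1\}$, and the $L$-condition then forces $\pi_{n} \in \{1,n\}$; but an elementary observation shows that a simple permutation of size $\geq 3$ can neither begin nor end with $1$ or with its maximum value (otherwise the other $n-1$ entries would form a non-trivial interval), so this case is impossible. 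The case $R \cap C$ is handled symmetrically. Therefore the number of bad insertions is $|L|+|R|+|C| = 2n + 2n + 4 = 4(n+1)$, and the formula follows. The principal obstacle is this disjointness check, which relies on the preliminary observation that a simple permutation cannot have an extremal value at either endpoint.
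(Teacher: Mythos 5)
Your proof is correct and follows essentially the same route as the paper: count the $(n+1)^2$ grid insertions, use Proposition~\ref{prop:intervalleOuCoin} to characterize the non-simple outcomes as corners plus $2$-intervals, show these $4(n+1)$ bad insertions are counted without overlap thanks to the simplicity of $\pi$, and invoke Proposition~\ref{simple->simple} for injectivity. Your explicit verification that $L$, $R$ and $C$ are pairwise disjoint is just a more carefully written version of the paper's remarks that the $4n$ cell-corners are distinct and that the grid corners are not already excluded.
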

\begin{proof}
Permutations of $S_{\pi+}$ are simple permutations obtained from $\pi$ by adding one point. There are $(n+1)^2$ ways to insert a point in $\pi$ (giving permutations not necessarily different): if we consider the graphical representation of $\pi$ in a grid, adding one point to $\pi$ corresponds to choosing one point in the grid, which is of size $(n+1)^2$. But we want only simple permutations, that exclude $4(n+1)$ points in the grid: for one fixed point $\pi_i$ of $\pi$, we can't take one of the $4$ corners of the cell where it is, that exclude $4n$ points which are all different because $\pi$ is simple so there is no points in contiguous cells. And we can't take one of the $4$ corners of the grid, and these $4$ points have not been excluded yet because $\pi$ is simple so there is no point in a corner. There are $4(n+1)$ points excluded among $(n+1)^2$ possibilities and Proposition~\ref{prop:intervalleOuCoin} ensure that they are the only points to exclude, so we have $(n+1)(n-3)$ points left which give simple permutations. We have now to ensure that we can't have the same simple permutations from two different points, which is given by Proposition~\ref{simple->simple}.
\end{proof}

So $|S_{\pi+}|$, which is the indegree of $\pi$ in the graph $G_1$, is independant of $\pi$. We are now interested in $|S_{\pi-}|$, the outdegree of $\pi$ in $G_1$. We know that it depends on $\pi$, and especially that $|S_{\pi-}| = 0$ if and only if $\pi$ is exceptional. We know also that $|S_{\pi-}| \leq |\pi|$. We consider the average outdegree in $G_1$.

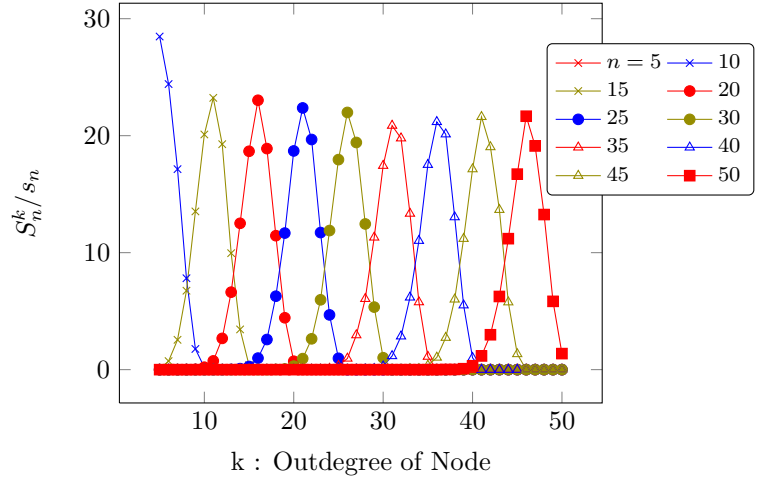
\begin{figure}[H]
\begin{center}
\pgfplotsset{width=8cm,legend style={
at={(1.1,0.9)}, anchor=north, legend columns=2, cells={anchor=west}, font=\footnotesize, rounded corners=2pt,
}}
\begin{tikzpicture}
\begin{axis}[xlabel=k : Outdegree of Node,
ylabel=$S_n^k/s_n$,axis on top] 
\addplot[color=red,mark=x] file {ficDistrib_5.txt};
\addplot[color=blue,mark=x] file {ficDistrib_10.txt};
\addplot[color=olive,mark=x] file {ficDistrib_15.txt};
\addplot[color=red,mark=*] file {ficDistrib_20.txt};
\addplot[color=blue,mark=*] file {ficDistrib_25.txt};
\addplot[color=olive,mark=*] file {ficDistrib_30.txt};
\addplot[color=red,mark=triangle] file {ficDistrib_35.txt};
\addplot[color=blue,mark=triangle] file {ficDistrib_40.txt};
\addplot[color=olive,mark=triangle] file {ficDistrib_45.txt};
\addplot[color=red,mark=square*] file {ficDistrib_50.txt};
\legend{$n=5$,$10$,$15$,$20$,$25$,$30$,$35$,$40$,$45$,$50$}
\end{axis} 
\end{tikzpicture}
\caption{Proportion $S_n^k/s_n$ of simple permutations with outdegree $k$ in $G_1$ among simple permutations of size $n$}\label{fig:outdegree}
\end{center}
\end{figure}

\begin{prop}\label{prop:n4}
Let $D_n$ be the average outdegree of simple permutations of size $n$ in $G_1$. Then $D_n = n-4-\frac{4}{n} + O(\frac{1}{n^2})$.
\end{prop}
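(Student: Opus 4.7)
The plan is to reduce the statement to an asymptotic estimate of the ratio $s_{n-1}/s_n$ via a clean double-counting argument, and then to invoke the known refined enumeration of simple permutations.

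First, I would count the edges of $G_1$ between the levels of sizes $n$ and $n-1$ in two different ways. Summing outdegrees over permutations of size $n$ gives $s_n \cdot D_n$, while summing indegrees over permutations of size $n-1$ and invoking Proposition~\ref{indegree} (applied with $n-1$ in place of $n$) gives $s_{n-1} \cdot n(n-4)$, since every simple permutation of size $n-1$ has exactly $n(n-4)$ parents in $G_1$. Equating these two quantities yields the exact identity
$$D_n = n(n-4) \cdot \frac{s_{n-1}}{s_n},$$
so the question reduces to computing an asymptotic expansion for $s_{n-1}/s_n$.

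For this I would appeal to the refined enumeration of simple permutations due to Albert, Atkinson and Klazar, namely $s_n = \frac{n!}{e^2}\bigl(1 - \frac{4}{n} + O(n^{-2})\bigr)$. A short expansion then gives $s_{n-1}/s_n = \frac{1}{n}\bigl(1 - 4n^{-2} + O(n^{-3})\bigr)$, and substitution into the identity above yields $D_n = n - 4 - 4/n + O(n^{-2})$ after a routine algebraic simplification.

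The double-counting step is essentially immediate from Proposition~\ref{indegree}, so the real content lies in the analytic input. The main obstacle to foresee is that the $-4/n$ term in $D_n$ is not accessible from the crude equivalent $s_n \sim n!/e^2$ alone: it inherits from the next-to-leading term in the Albert--Atkinson--Klazar expansion of $s_n$. One must therefore carry that correction with enough precision through the ratio $s_{n-1}/s_n$ to guarantee the claimed $O(n^{-2})$ error bound, but beyond this there are no conceptual difficulties.
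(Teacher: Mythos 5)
Your proposal is correct and follows exactly the paper's argument: double-count the edges between levels $n$ and $n-1$ using Proposition~\ref{indegree} to get $D_n = n(n-4)\,s_{n-1}/s_n$, then substitute the Albert--Atkinson--Klazar asymptotics for $s_n$. One small point of care: to justify the claimed $O(n^{-2})$ error in $D_n$ (equivalently the $O(n^{-3})$ relative error in $s_{n-1}/s_n$) you need the expansion to the precision actually given in their Theorem~5, namely $s_n = \frac{n!}{e^2}\bigl(1-\frac{4}{n}+\frac{2}{n(n-1)}+O(n^{-3})\bigr)$, not merely $1-\frac{4}{n}+O(n^{-2})$ as written in your second paragraph --- though you do flag this precision issue yourself.
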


\begin{proof}
In $G_1$, there is an edge from a simple permutation $\sigma$ to a simple permutation $\pi$ if and only if we can obtain $\pi$ from $\sigma$ by deleting one point. So edges that come from permutations of size $n$ are those which go to permutations of size $n-1$. Let $s_n$ be the number of simple permutations of size $n$. From Proposition~\ref{indegree} we know that there are $s_{n-1} \times n(n-4)$ such edges. So $D_n = \dfrac{s_{n-1} \times n(n-4)}{s_n}$. But from Theorem 5 of \cite{AAK03} we know that $s_n = \dfrac{n!}{e^2}\Big(1-\dfrac{4}{n}+\dfrac{2}{n(n-1)}+O(n^{-3})\Big)$ and a straightforward computation allows us to conclude.
\end{proof}

We are now interested in the number $S_n^k$ of simple permutations of size $n$ and of outdegree $k$ fixed. As example we know that $S_n^0 = 4$ for every even $n$ and $S_n^0 = 0$ for every odd $n$ (number of exceptional permutations). Figure~\ref{fig:outdegree} show the percentage of simple permutations which have outdegree $k$. Each plot shows this distribution for a given size of permutations as indicated in the  caption. Notice that these plots illustrate the result given in Proposition~\ref{prop:n4}.

\begin{prop}
Let $S_n^k = |\{\pi \ |\ |S_{\pi-}| = k\}|$ be the number of simple permutations of size $n$ and of outdegree $k$ in $G_1$. Then for every fixed $k$, the proportion $\dfrac{S_n^k}{s_n}$ of simple permutations of outdegree $k$ among simple permutations of size $n$ tends to zero when $n$ tends to infinity.
\end{prop}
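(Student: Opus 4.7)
The plan is to exploit the fact that the average outdegree $D_n = n - 4 + O(1/n)$ established in Proposition~\ref{prop:n4} lies extremely close to the maximum possible outdegree, which is $n$ (since deleting any of the $n$ points of $\pi$ produces at most $n$ children in $G_1$, the bound $|S_{\pi-}| \leq |\pi|$ already observed). This forces most simple permutations to have outdegree very close to $n$, so for any fixed $k$ the density of permutations with outdegree exactly $k$ must vanish.

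Concretely, I would start from the identities
\[
\sum_{k=0}^{n} S_n^k \;=\; s_n \qquad \text{and} \qquad \sum_{k=0}^{n} k\,S_n^k \;=\; s_n D_n,
\]
and, for an arbitrary fixed threshold $k_0$, split the outdegree sum according to whether $k \leq k_0$ or $k > k_0$, bounding each contribution by its maximal value in its range:
\[
s_n D_n \;\leq\; k_0 \sum_{k \leq k_0} S_n^k \;+\; n \sum_{k > k_0} S_n^k \;=\; n\, s_n \;-\; (n-k_0) \sum_{k \leq k_0} S_n^k.
\]
Rearranging yields the Markov-type estimate
\[
\sum_{k \leq k_0} \frac{S_n^k}{s_n} \;\leq\; \frac{n - D_n}{n - k_0}.
\]

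Finally, plugging in $n - D_n = 4 + 4/n + O(n^{-2})$ from Proposition~\ref{prop:n4}, the right-hand side is of order $1/n$ and tends to zero for any fixed $k_0$. Taking $k_0 = k$ gives in particular $S_n^k/s_n \leq (n-D_n)/(n-k) \to 0$, which is the desired conclusion. There is no real obstacle: the only ingredient beyond the averaging step is the uniform upper bound $|S_{\pi-}| \leq n$ on the outdegree, already noted in the paragraph preceding Proposition~\ref{prop:n4}, and the tight asymptotics of $D_n$ supplied by that proposition; once these are in hand the argument is a two-line Markov inequality and requires no further combinatorial analysis.
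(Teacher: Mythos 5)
Your proof is correct and rests on exactly the same two ingredients as the paper's: the first-moment identity $s_n D_n = \sum_k k\,S_n^k$ together with the asymptotics of $D_n$ from Proposition~\ref{prop:n4}. The paper phrases it as a proof by contradiction (assuming $S_n^k/s_n > \epsilon$ infinitely often forces $D_n \leq k + n(1-\epsilon)$, contradicting $D_n \geq n-5$), whereas you run the Markov inequality directly and obtain the slightly stronger quantitative bound $\sum_{k \leq k_0} S_n^k/s_n = O(1/n)$, but the argument is essentially identical.
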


\begin{proof}
By definition $s_n \times D_n = \sum_{i=0}^{n} i\times S_n^i$. Suppose that there exists $k$ such that $\dfrac{S_n^k}{s_n}$ does not tend to zero, then there exists $\epsilon > 0$ such that $\forall n_0$, $\exists n \geq n_0$ such that $\dfrac{S_n^k}{s_n} > \epsilon$. But then
\[D_n = \sum_{i=0}^{n} i \times \dfrac{S_n^i}{s_n}
= k \times \dfrac{S_n^k}{s_n} + \sum_{i\neq k, i=0}^{n} i \times \dfrac{S_n^i}{s_n}
\leq k + \sum_{i\neq k, i=0}^{n} n \times \dfrac{S_n^i}{s_n}
= k + n\Big(1-\dfrac{S_n^k}{s_n}\Big)
\leq k + n(1-\epsilon)
\]
but from Proposition~\ref{prop:n4}, for $n_0$ large enough $D_n \geq n-5$, a contradiction.
\end{proof}


\section{An algorithm to generate simple permutations in a wreath-closed permutation class}\label{sec:algo}

Theorem~\ref{thm:main} characterizes the pattern relation between non exceptional simple permutations. This theorem ensures that if $\sigma$ is a non exceptional simple permutation and $Av(B)$ a wreath-closed class of permutations, then $\sigma$ does not belong to $Av(B)$
\begin{itemize}
\item if and only if it contains a permutation of $B$ as a pattern.
\item if and only if it is  equal to a permutation of $B$ or contains as a pattern a simple permutation of size $|\sigma|-1$ which does not belong to $Av(B)$. 
\end{itemize}

This recursive test leads to Algorithm~\ref{alg:algo1} (see p.17).
\begin{figure}[ht]
\begin{algorithm}[H] 

\KwData{$B$ a finite set of simple permutations not containing $12$ or $21$} 
\KwResult{$Si$ the set of simple permutations in $Av(B)$ }
$Si_1 \leftarrow \{1\}, Si_2 \leftarrow \{ 12,21\}, Si_3 \leftarrow \emptyset, Si_4\leftarrow \{ 2413,3142\} \setminus B$\; 
$n \leftarrow 5$\;
\While{$Si_{n-1} \not= \emptyset$ or $Si_{n-2} \not= \emptyset$}{
	$Si_n \leftarrow \emptyset$\;
	\For{$\pi \in Si_{n-1}$}{
		\For{ each admissible way to insert a point into $\pi$ and obtain a simple permutation $\sigma$  }{
			\If{$\sigma \not\in B$}{
				inS = true\;
				\For{ each $n$ ways to delete a point in $\sigma$}{
 					Compute the obtained permutation $\tau$\;
					\If{ $\tau$ is simple}{
						\If{$\tau \not\in Si_{n-1} $}{
							inS = false \;
						}
					}
				}
				\If{inS = true}{
					$Si_n \leftarrow Si_n \bigcup \sigma$
				}
			} 
		}
	}
	\For{$\pi$ exceptional of type $i$ $ \in Si_{n-2}$}{
		\If{$\sigma$ exceptional of type $i$ and of size $n \not\in B$}{
			$Si_n \leftarrow Si_n \bigcup \sigma$
		}
	}
	$n \leftarrow n+1$\; 
	 
}
\caption{Generating simple permutations in a wreath-closed class of permutations}\label{alg:algo1}
\end{algorithm}
\end{figure}
Its validity  is proved in the next Theorem based on results from Section~\ref{sec:MotifsSimples}. Note that in order to avoid trivial cases, we assume that $B$ does not contain $12$ or $21$.

\begin{thm} \label{th:algo}
The set $Si_n$ computed by Algorithm~\ref{alg:algo1} is the set of simple permutations of size $n$ contained in $Av(B)$.
\end{thm}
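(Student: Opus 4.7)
The plan is to argue by strong induction on $n$. The base cases $n \in \{1,2,3,4\}$ hold by direct inspection: the simple permutations of size at most $4$ are $1, 12, 21, 2413, 3142$, there is no simple permutation of size $3$, and by hypothesis $12, 21 \notin B$, so each initialized $Si_n$ is correct. For the inductive step, fix $n \geq 5$, assume that $Si_k$ is the set of simple permutations of size $k$ in $Av(B)$ for every $k < n$, and prove both inclusions separately.

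For completeness, let $\sigma$ be a simple permutation of size $n$ in $Av(B)$. If $\sigma$ is non-exceptional, Proposition~\ref{prop:size-1} provides a simple pattern $\pi$ of $\sigma$ with $|\pi|=n-1$. Since $Av(B)$ is downward-closed under $\preceq$, we get $\pi \in Av(B)$, hence $\pi \in Si_{n-1}$ by the inductive hypothesis. Proposition~\ref{simple->simple} gives a unique way to recover $\sigma$ by inserting one point into $\pi$, and this insertion is among those the outer loop enumerates. The inner loop will confirm the addition because every simple permutation obtained by deleting one point from $\sigma$ is a pattern of $\sigma$, hence lies in $Av(B)$ and, by induction, in $Si_{n-1}$; combined with $\sigma \notin B$ this adds $\sigma$ to $Si_n$. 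If $\sigma$ is exceptional of type $i$, then $n$ is even and, by Proposition~\ref{prop:TypeExceptional}, the exceptional permutation $\pi$ of type $i$ and size $n-2$ is a pattern of $\sigma$; thus $\pi \in Av(B)$ and by induction $\pi \in Si_{n-2}$, so the second loop adds $\sigma$.

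For soundness, observe that every permutation added is simple of size $n$ by construction, and the algorithm explicitly checks $\sigma \notin B$. Assume for contradiction that some $\sigma \in Si_n$ does not lie in $Av(B)$; then some $\beta \in B$ satisfies $\beta \preceq \sigma$, and $\sigma \notin B$ forces $\beta \prec \sigma$. Since $B$ contains no simple permutation of size at most $3$, $|\beta| \geq 4$. If $\sigma$ is non-exceptional, Theorem~\ref{thm:main} yields a simple $\tau$ of size $n-1$ with $\beta \preceq \tau \prec \sigma$; then $\tau \notin Av(B)$ and by induction $\tau \notin Si_{n-1}$, but $\tau$ is one of the simple deletions the inner loop inspects, so the success flag would be reset --- contradicting $\sigma \in Si_n$. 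If $\sigma$ is exceptional of type $i$, Proposition~\ref{prop:except} forces $\beta$ to be exceptional of type $i$, and $|\beta|$ is even with $|\beta| \leq n-2$, so by Proposition~\ref{prop:TypeExceptional}, $\beta$ is a pattern of the exceptional $\pi$ of type $i$ and size $n-2$ used by the second loop. Hence $\pi \notin Av(B)$ and by induction $\pi \notin Si_{n-2}$, contradicting the fact that the loop drew $\sigma$ from $\pi$.

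The crucial step is the soundness direction in the non-exceptional case: the algorithm's only local certificate is that every simple single-point deletion of $\sigma$ already belongs to $Si_{n-1}$, and this test suffices precisely because Theorem~\ref{thm:main} promises that any forbidden pattern can be detected by a simple intermediate of size $n-1$. Everything else --- enumeration completeness via Proposition~\ref{simple->simple}, pattern-order closure of $Av(B)$, and the handling of exceptional permutations via Propositions~\ref{prop:except} and \ref{prop:TypeExceptional} --- is routine.
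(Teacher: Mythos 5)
Your proof is correct and follows essentially the same route as the paper's: strong induction on $n$, completeness via Proposition~\ref{prop:size-1} together with downward-closure of $Av(B)$, and soundness via Theorem~\ref{thm:main} in the non-exceptional case and Propositions~\ref{prop:except} and~\ref{prop:TypeExceptional} in the exceptional case. The only differences are cosmetic: you make explicit a couple of points the paper leaves implicit (that $|\beta|\geq 4$ so Theorem~\ref{thm:main} applies, and the enumeration of insertions via Proposition~\ref{simple->simple}).
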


\begin{proof}
The preceding theorem holds for $n \leq 4$. For $n \geq 5$, we show it by induction.

We have to prove that every simple permutation $\sigma$ of size $n$ in $Av(B)$ belongs to $Si_n$. Let $\sigma$ be a simple permutation of size $n$ in $Av(B)$.
If $\sigma$ is not exceptional, there exists $\pi$ simple such that $\pi \preceq \sigma$ and $|\pi| = |\sigma|-1$ (Proposition~\ref{prop:size-1}). 
By induction hypothesis, $\pi \in Si_{n-1}$ so that $\sigma$ is considered at line~$6$ of our algorithm. As $\sigma \in Av(B)$, $\sigma \not\in B$ and every simple pattern $\tau$ of $\sigma$ of length $n-1$ is in $Av(B)$ and by induction hypothesis lies in $Si_{n-1}$.
Thus  line~$18$ is reached and $\sigma$ is added to $Si_n$.
If $\sigma$ is exceptional, $\sigma$ is considered at line~$24$ of our algorithm and is added to $Si_n$ by induction hypothesis.

Reciprocally, let us prove that every permutation $\sigma \in Si_n$ is a simple permutation of size $n$ of $Av(B)$. If $\sigma \in Si_n$ notice first that $\sigma$ is simple and of size $n$. Suppose now that $\sigma \not\in Av(B)$ then it exists $\pi \in B$ ($\pi$ simple) such that $\pi \preceq \sigma$. We have $\sigma \neq \pi$ otherwise $\sigma \in B$ but there is no permutation of $B$ in $Si_n$ (because of lines $7$ and $24$ of the algorithm). If $\sigma$ is not exceptional, using Theorem~\ref{thm:main}, we can find $\tau$ simple of size $n-1$ such that $\pi \preceq \tau \preceq \sigma$, so $\tau \not\in Av(B)$ and by induction hypothesis $\tau \not\in Si_{n-1}$. But our algorithm tests every pattern of $\sigma$ of size $n-1$ in line~$9$ so $\sigma$ is not added to $Si_n$.
If $\sigma$ is exceptional, then $|\pi|$ is even (Proposition~\ref{prop:except}) so $\pi \preceq \sigma'$ where $\sigma'$ is the exceptional permutation of the same type as $\sigma$ of size $|\sigma| - 2$. By induction hypothesis $\sigma' \notin Si_{n-2}$ so $\sigma$ is not added to $Si_n$ and we have the result.
\end{proof}

\begin{prop}
Algorithm~\ref{alg:algo1} terminates if and only if $Av(B)$ contains only a finite number of simple permutations. In this case it gives all simple permutations in $Av(B)$.
\end{prop}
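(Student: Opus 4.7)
The plan is to treat the two implications of the equivalence separately and to deduce the last assertion from Theorem~\ref{th:algo}, which guarantees that whenever the algorithm reaches iteration $n$, the computed set $Si_n$ coincides with the set of simple permutations of size $n$ in $Av(B)$.

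For the easy direction ``finitely many simple permutations $\Rightarrow$ termination'', I would fix an integer $N$ such that every simple permutation of $Av(B)$ has size at most $N-1$. Theorem~\ref{th:algo} then forces $Si_n=\emptyset$ for every $n \geq N$ that is reached by the algorithm. As soon as $n = N+2$, both $Si_{n-1}$ and $Si_{n-2}$ are empty, the test of the \textsf{while} loop fails, and the algorithm halts.

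For the converse, suppose the algorithm terminates at some iteration $n$, so that $Si_{n-1}$ and $Si_{n-2}$ are both empty. By Theorem~\ref{th:algo} this means that $Av(B)$ contains no simple permutation of size $n-1$ nor of size $n-2$. I would then show by contradiction that $Av(B)$ contains no simple permutation of size $\geq n-2$ at all. Given such a hypothetical $\sigma \in Av(B)$ of size $N \geq n$, I would build a descending chain of simple permutations of $Av(B)$ by successive pattern extraction (using that $Av(B)$ is closed under patterns): at each step, when the current permutation is non exceptional apply Proposition~\ref{prop:size-1} to drop its size by one, and when it is exceptional apply Proposition~\ref{prop:except} to drop its size by two. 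Each step decreases the size by $1$ or $2$, so the chain cannot skip over the block of two consecutive sizes $\{n-2,n-1\}$; it must eventually produce a simple permutation of size $n-2$ or $n-1$ lying in $Av(B)$, contradicting $Si_{n-1}=Si_{n-2}=\emptyset$.

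The delicate point is precisely this size-bookkeeping: one needs both $Si_{n-1}$ and $Si_{n-2}$ to be empty to block the descent, because an exceptional permutation can cause a jump of $2$ in the chain; checking a single empty level would not suffice. Once this is settled, the remaining ``in this case it gives all simple permutations of $Av(B)$'' clause is immediate: each $Si_i$ actually computed by the algorithm is correct by Theorem~\ref{th:algo}, and the above descent argument shows that every simple permutation of $Av(B)$ has size strictly less than $n-1$ and therefore appears in one of the $Si_i$.
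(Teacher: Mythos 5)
Your proof is correct and follows essentially the same route as the paper: both directions rest on Theorem~\ref{th:algo}, and the key point in the ``terminates $\Rightarrow$ finite'' direction is that a simple permutation of size $\geq n$ in $Av(B)$ would force a simple pattern of size $n-1$ or $n-2$ in $Av(B)$. The only cosmetic difference is that the paper obtains that pattern in one step from Propositions~\ref{prop:exceptional} and~\ref{prop:simplePattern} (which give simple patterns of every admissible size, respectively every even size), whereas you derive it by iterating the one-or-two-step descent of Propositions~\ref{prop:size-1} and~\ref{prop:except}; both arguments rely on the same structural facts.
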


\begin{proof}
If Algorithm~\ref{alg:algo1} terminates, there exists $n \geq 5$ such that $Av(B)$ contains no simple permutation of size $n-1$ or $n-2$. Suppose that $Av(B)$ contains a simple permutation $\sigma$ of size $k \geq n$, then from Proposition~\ref{prop:exceptional} and Proposition~\ref{prop:simplePattern} $\sigma$ has a simple pattern of size $n-1$ or $n-2$ in $Av(B)$, a contradiction. So $Av(B)$ contains no simple permutation of size greater than $n-2$ and Theorem~\ref{th:algo} ensures that the algorithm gives all simple permutations in $Av(B)$.

Conversely if $Av(B)$ contains only a finite number of simple permutations, let $k$ be the size of the greater simple permutation in $Av(B)$. From Theorem~\ref{th:algo}, the algorithm computes $Si_{k+1}=Si_{k+2}=\emptyset$ and the algorithm terminates.
\end{proof}

Before running Algorithm~\ref{alg:algo1} we can test whether $Av(B)$ contains a finite number of simple permutations in time ${\mathcal O}(n \log n)$ where $n=\sum_{\pi \in B} |\pi|$ thanks to the algorithm given in \cite{BBPR09}.
If $Av(B)$ contains a finite number of simple permutations, Algorithm~\ref{alg:algo1} give all simple permutations in $Av(B)$. If $Av(B)$ contains an infinite number of simple permutations, we can use a modified version of the algorithm to obtain all simple permutations in $Av(B)$ of size less than a fixed integer $k$: it is sufficient to replace in the algorithm ``while $Si_{n-1} \not= \emptyset$ or $Si_{n-2} \not= \emptyset$'' by "for $n \leq k$''.

Let us now evaluate the complexity of our algorithm.

\begin{prop}
The complexity of Algorithm~\ref{alg:algo1} is  ${\mathcal O}\big(\sum_{n=5}^{k+1} n^{4} |Si_{n-1}|\big)$ where $k$ is the size of the longest simple permutation in $Av(B)$.
\end{prop}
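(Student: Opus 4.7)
The plan is to account for the work done in each iteration of the outer \texttt{while} loop (indexed by $n$) and then sum over $n$. As observed, the loop is only entered while $Si_{n-1}\neq\emptyset$ or $Si_{n-2}\neq\emptyset$; if $k$ is the size of the longest simple permutation in $Av(B)$, this means $n$ ranges over $\{5,\ldots,k+2\}$, and the final iteration contributes only the constant number of exceptional candidates, so the dominant term comes from $n\in\{5,\ldots,k+1\}$.

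The work for a fixed $n$ breaks into two blocks. The first block (lines $5$--$19$) iterates over $\pi\in Si_{n-1}$, so $|Si_{n-1}|$ outer passes. For each such $\pi$, the admissible insertions are exactly the non-excluded cells of the $n\times n$ grid, and by Proposition~\ref{indegree} there are $n(n-4)=O(n^{2})$ of them; enumerating them and producing each candidate $\sigma$ costs $O(n)$ per candidate. For each produced $\sigma$, I would perform: a membership test $\sigma\notin B$ in time $O(n)$ (since $B$ is fixed, preprocessing it into a hash table makes lookups linear in the key length); then the $n$ deletions, where each deletion requires $O(n)$ to write down $\tau$, $O(n)$ to decide whether $\tau$ is simple (using a linear-time simplicity test, as exists in the literature), and $O(n)$ to look up $\tau$ in $Si_{n-1}$ by hashing its one-line encoding. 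Altogether this gives $O(n^{2})$ per inserted candidate $\sigma$ and hence $O(n^{2})\cdot O(n^{2})=O(n^{4})$ per $\pi\in Si_{n-1}$. Summing over $\pi$ yields $O(n^{4}|Si_{n-1}|)$ for this block.

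The second block (lines $20$--$23$) handles exceptional permutations of size $n$ built from $Si_{n-2}$. Since for any even size there are only four exceptional permutations (one per type) and Proposition~\ref{prop:except} guarantees that each is obtained from the unique exceptional pattern of the same type in $Si_{n-2}$, this block performs $O(1)$ candidate constructions and $O(n)$ work each, contributing $O(n)$ in total, which is absorbed by the $O(n^{4}|Si_{n-1}|)$ bound of the first block (or into the $n=k+2$ tail, negligible). Summing the per-$n$ costs yields the announced
\[
\mathcal{O}\!\left(\sum_{n=5}^{k+1} n^{4}\,|Si_{n-1}|\right).
\]

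The delicate point, which I would emphasize, is that the naive bound on the number of distinct $\sigma$ produced at line $6$ might seem to be $O(n^{2})|Si_{n-1}|$ with repetitions; Proposition~\ref{simple->simple} guarantees that no simple $\sigma$ is built twice from the same $\pi$, so the insertion enumeration is not a source of hidden blow-up. The other technical point is the simplicity test for $\tau$ at line $11$: this must be linear time for the analysis to go through, which is standard but worth citing, since a quadratic simplicity test would inflate the bound to $n^{5}$. Everything else is bookkeeping on top of Proposition~\ref{indegree} and Proposition~\ref{simple->simple}.
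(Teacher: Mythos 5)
Your proof is correct and follows essentially the same route as the paper: both bound the outer loop by $n$ ranging up to $k+2$, count $n(n-4)$ admissible insertions per $\pi\in Si_{n-1}$ via Proposition~\ref{indegree}, charge $O(n)$ for each construction, membership test, deletion, and simplicity check (the paper uses tries where you use hashing, an immaterial difference), and dismiss the exceptional block as $O(n)$. The only substantive addition is your explicit remark that the simplicity test must be linear time, which the paper asserts without comment; this is a fair point but does not change the argument.
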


\begin{proof}
First, we encode every set of permutations as tries, allowing a linear algorithm to check if a permutation is in the set. The {\em while} loop beginning at line~$3$ is done for $n$ from $5$ to $k+2$. The inner loop beginning at line~$5$ is repeated $|Si_{n-1}|$ times (with $|Si_{k+1}|=0$). The loop of line~$6$ is repeated $n(n-4)$ times (see Proposition~\ref{indegree}). This loop performs the following tests:
\begin{itemize}
\item Compute $\sigma$ : ${\mathcal O}(n)$
\item Test whether $\sigma$ is in $B$ : ${\mathcal O}(n)$ using tries.
\item Loop at line~$9$ is performed $n$ times and perform each time the following operations:
\begin{itemize}
\item Compute $\tau$ : ${\mathcal O}(n)$
\item Test whether $\tau$ is simple : ${\mathcal O}(n)$
\item Test whether $\tau \in S_{n-1}$ : ${\mathcal O}(n)$
\end{itemize}
\item Add if necessary $\sigma$ into $Si_n$ : ${\mathcal O}(n)$ as we use tries.
\end{itemize}

Thus the inner part of loop in line~$5$ has complexity ${\mathcal O}(n(n-4)(n+n+n(n+n+n)+n))$ leading to the claimed result.

Indeed, the exceptional case is easy to implement in ${\mathcal O}(n)$ time as there are at most $4$ exceptional permutations of a given size.

\end{proof}

\section{Concluding remarks}

Theorem~\ref{thm:1to1} gives a structural result on simple permutations poset. It has many implications, two of which are explicited in this article: the first one being the average number of points that can be removed in a simple permutation and remain simple, the second one leading to a polynomial time algorithm for computing the set of simple permutations in a wreath-closed class. For the latter problem, we restrict ourselves to wreath-closed class of permutations and unfortunately cannot apply it directly to general classes. Indeed, to adapt our algorithm to the general case, the part of the algorithm from line~$7$ to line~$20$ can be replaced by testing if $\sigma \in Av(B)$ and in that case adding it to $Si_n$. Unfortunately, there is no efficient algorithm to test if a permutation is in a class $Av(B)$. The only known algorithm is to test if $\sigma$ avoids every permutations in $B$. Thus we have the following proposition:
 
 \begin{prop}
 For every class $Av(B)$ containing a finite number of simple permutations, we can compute the simple permutations in $Av(B)$ in time $\sum_{n=5}^{k+1} |Si_{n-1}|n^2f_{Av(B)}(n)$ where $k$ is the size of the longest simple permutation in $Av(B)$ and $f_{Av(B)}(n)$ the complexity of testing if a permutation of size $n$ belongs to $Av(B)$.
 \end{prop}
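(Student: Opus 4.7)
The plan is to adapt Algorithm~\ref{alg:algo1} in a single focused way: discard the recursive simple-pattern certificate (lines~$7$--$20$), which was specific to the wreath-closed setting, and replace it by a direct membership test $\sigma \in Av(B)$, paid at its true cost $f_{Av(B)}(n)$. The outer structure of the algorithm is kept intact: for each $n$, iterate over $\pi \in Si_{n-1}$, enumerate the admissible insertions of a point into $\pi$ that yield a simple permutation $\sigma$, and admit $\sigma$ into $Si_n$ iff the test succeeds. The exceptional branch (lines~$21$--$25$) is retained verbatim, and the termination condition remains the same.

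Correctness I would establish by induction on $n$, mirroring the two directions of Theorem~\ref{th:algo}. Assume $Si_{n-1}$ and $Si_{n-2}$ are exactly the sets of simple permutations of $Av(B)$ of those sizes. For completeness, take $\sigma$ simple, in $Av(B)$, of size $n$. If $\sigma$ is non exceptional, Proposition~\ref{prop:size-1} gives a simple pattern $\pi$ of size $n-1$; since $Av(B)$ is downward-closed, $\pi \in Av(B)$, and by induction $\pi \in Si_{n-1}$. Thus $\sigma$ is enumerated from $\pi$ and, passing the test $\sigma \in Av(B)$, lands in $Si_n$. If $\sigma$ is exceptional, the exceptional permutation of the same type of size $n-2$ belongs to $Av(B)$ (Proposition~\ref{prop:except}), hence to $Si_{n-2}$ by induction, and $\sigma$ is added through the exceptional branch. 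Soundness is immediate: any element added to $Si_n$ is simple by construction of the enumeration and lies in $Av(B)$ by the test.

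For the complexity bound I would count, for each $n$ from $5$ to $k+1$: the middle loop does $|Si_{n-1}|$ iterations; Proposition~\ref{indegree} bounds the number of admissible insertions from a fixed $\pi$ of size $n-1$ by $n(n-4) = O(n^2)$; each candidate $\sigma$ is built in $O(n)$, tested in $f_{Av(B)}(n)$, and inserted with deduplication in a trie in $O(n)$. The exceptional branch contributes only $O(n)$ per $n$ since there are at most four exceptional permutations of each size. Summing yields $\sum_{n=5}^{k+1} |Si_{n-1}|\, n^2\, f_{Av(B)}(n)$, the $O(n)$ trie and computation costs being absorbed in $f_{Av(B)}(n)$ (which must read all of $\sigma$). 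The last outer iteration $n=k+2$ is free since $|Si_{k+1}| = 0$.

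I expect no conceptual obstacle: the only subtlety is that we must be careful not to claim a better bound by overlooking that the same $\sigma$ may be produced from several parents $\pi \in Si_{n-1}$, so the $O(n^2)$ candidates per $\pi$ really are examined and tested individually. This is what forces the $n^2$ factor and matches the statement exactly; the trie-based deduplication controls the space but not the number of membership tests.
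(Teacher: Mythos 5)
Your proposal is correct and follows exactly the route the paper intends: the paper justifies this proposition only by the preceding remark that lines~$7$--$20$ of Algorithm~\ref{alg:algo1} are replaced by a direct test $\sigma \in Av(B)$, with the $n(n-4)$ insertion count from Proposition~\ref{indegree} and the per-level factor $|Si_{n-1}|$ carried over from the earlier complexity analysis. Your added details (the correctness induction via Proposition~\ref{prop:size-1} and downward closure, and the observation that duplicate candidates from distinct parents keep the $n^2 f_{Av(B)}(n)$ factor honest) are consistent with, and merely flesh out, the paper's argument.
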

Note that in the preceding proposition function $f_{Av(B)}$ is bounded by $\sum_{\tau \in B} n^{|\tau|}$. Indeed a naive algorithm consists in testing the pattern condition for each permutation $\tau$ in the basis. In some cases, this test can be improved, see for example \cite{AAAH01,BRV07}. In general case, this give a complexity of order ${\mathcal O} \big(|B|.|Si_B|.k^{p+2}\big)$ for computing the set $Si_B$ of simple permutations in $Av(B)$, with $p = \max \{|\tau| : \tau \in B\}$ and $k = \max \{|\pi| : \pi \in Si_B\}$. For wreath-closed classes, Algorithm~\ref{alg:algo1} has a complexity of order ${\mathcal O}\big(|Si_B|.k^{4}\big)$

An open question is whether there exists a more efficient algorithm in the general case and more precisely for testing if a permutation belongs to a given class.

\bibliographystyle{plain}
\bibliography{biblio}
\end{document}